\documentclass[pdflatex, sn-nature]{sn-jnl}

\usepackage{graphicx}%
\usepackage{multirow}%
\usepackage{amsmath,amssymb,amsfonts}%
\usepackage{amsthm}%
\usepackage{mathrsfs}%
\usepackage[title]{appendix}%
\usepackage{xcolor}%
\usepackage{textcomp}%
\usepackage{manyfoot}%
\usepackage{booktabs}%

\newtheorem{theorem}{Theorem}[section]
\newtheorem{lemma}[theorem]{Lemma}

\providecommand{\E}{\mathbb{E}}
\providecommand{\R}{\mathbb{R}}
\providecommand{\Var}{\mathrm{Var}}
\providecommand{\Bias}{\mathrm{Bias}}
\providecommand{\MSE}{\mathrm{MSE}}
\providecommand{\MISE}{\mathrm{MISE}}
\providecommand{\dd}{\mathrm{d}}
\providecommand{\ee}{\mathrm{e}}

\begin{document}

\title[Universal Prediction of Extremes]{The data-driven extreme value distribution: non-parametric tail estimation with a derived stability criterion}

\author*[1]{\fnm{Michael} \sur{Sandbichler}}\email{m.sandbichler@datalabhell.ac.at}
\author[1]{\fnm{Tobias} \sur{Hell}}\email{t.hell@datalabhell.ac.at}

\affil[1]{\orgname{Data Lab Hell GmbH}, \orgaddress{\city{Zirl}, \country{Austria}}}

\abstract{
Quantifying the likelihood of extreme events underpins risk assessment, yet classical Extreme Value Theory relies on asymptotic assumptions that fail in the data-sparse, non-stationary regimes practitioners increasingly face.
We introduce the Data-Driven Extreme Value Distribution (DDEVD), a non-parametric estimator that aggregates all observations metastatistically and reconstructs the base distribution with a kernel, removing parametric tail assumptions.
We derive its optimal bandwidth and prove a stability law $m < C\,n^{1+\gamma/2}$ relating reliable extrapolation to the extreme value index $\gamma$.
In sub-hourly Alpine precipitation, DDEVD recovers stable 100-year return levels from single decades (calibration ratio $0.96$), departing from the full-record reference by over $50\,\%$ in fewer than one window in fifty --- versus one in five for a GEV fit.
In metallurgical micrographs, it matches a generalised extreme-value fit on the safety-relevant grain-size tail, where the standard log-normal over-predicts by $58\,\%$ at $1\,\mathrm{cm}^{2}$.
}

\keywords{Extreme Value Theory, Non-parametric statistics, Tail risk estimation, Metastatistics}

\maketitle

Extreme events set the design limits of our infrastructure, the resilience of our ecosystems, the safety of our materials and the stability of our economies.
Whether sizing a flood defence in the Alpine region or qualifying a refractory-metal component for a medical device, the critical variable is the event that has not yet happened: the ``300-year'' flood or the maximal microscopic structural anomaly that initiates failure.
The statistical analysis of such extremes is a cornerstone of risk assessment~\cite{castillo2012extreme, embrechts2013modelling}, yet it faces a fundamental paradox: the most dangerous events are also the rarest, and therefore the hardest to predict from historical data.
Classical Extreme Value Theory (EVT) addresses this through asymptotic arguments --- the Block Maxima Method (BMM)~\cite{gumbel1958statistics} and the Peak Over Threshold (POT) method~\cite{pickands1975statistical} both assume long, stationary records~\cite{ferreira2015block} --- but practitioners are increasingly forced into data-sparse regimes where this asymptotic convergence fails, leading to volatile parameter estimates and unacceptable uncertainty in risk quantification~\cite{haan2006extreme, katz2002statistics}.

A promising alternative is the Metastatistical Extreme Value (MEV) framework~\cite{marani2015metastatistical, zorzetto2016emergence, miniussi2020metastatistical, marra2018metastatistical}, which estimates the distribution of block maxima by raising an estimated base CDF $\hat F_X$ to the power of the block size $n$, retaining information from all observations rather than only the maxima, and which has since been extended to a computationally lighter ``simplified'' form (SMEV) and applied at continental and global scale~\cite{miniussi2021estimation, grundemann2023extreme}.
Existing MEV implementations, however, rely on parametric assumptions for the base distribution (typically a Weibull, motivated by the non-asymptotic statistics of ordinary rainfall events~\cite{miniussi2020nonasymptotic}, or a data-driven choice among candidate ordinary distributions~\cite{mushtaq2022reliable}), limiting their applicability when the true tail shape is unknown.
The simplified MEV (SMEV) reduces the metastatistical model to a single ordinary distribution fitted across all blocks, lowering parameter count and estimation variance, but it retains a parametric form for the base distribution and so inherits the same vulnerability when the true tail shape is unknown or varies across blocks~\cite{miniussi2021estimation}.
The DDEVD instead reconstructs the base distribution non-parametrically, so the choice of ordinary family --- Weibull, gamma, or otherwise --- is never made and tail misspecification cannot enter through that channel.
This trades the low variance of a correctly specified SMEV for robustness against precisely the misspecification that motivates a non-parametric approach, while retaining an explicit MISE-based stability theory that purely empirical tail estimators lack.

Here we introduce the Data Driven Extreme Value Distribution (DDEVD), a non-parametric estimator designed to resolve the small-sample paradox.
The DDEVD extends MEV by replacing the parametric base distribution with a kernel density estimate~\cite{silverman2018density}, placing it in the tradition of smooth, non-parametric estimation of tail functionals~\cite{drees1998smooth} while inheriting the data efficiency of the metastatistical approach and removing distributional assumptions on the tail.
Unlike block based parametric methods that discard most observations to fit an asymptotic tail (e.g.\ retaining only annual maxima), the DDEVD uses the entire empirical distribution and recovers the tail shape directly from the data with no \textit{a priori} assumption beyond the kernel, giving it an additional layer of robustness in the face of tail misspecification.

We demonstrate the framework on two contrasting applications.
Precipitation extremes are well-approximated by Weibull statistics at the daily scale, while the stochastic behaviour of within-day maxima of 5-minute precipitaion sums --- the primary drivers of flash floods --- is governed by convective and orographic processes that defy simple parametric modelling~\cite{lenderink2008increase, ban2015heavy}.
Further, high-frequency records typically span only a few decades.
The same statistical problem arises in spatial form in the microstructural quality control of sintered refractory metals such as molybdenum, where ultimate mechanical properties are governed by the largest grains anywhere in the bulk volume, yet only a small number of two-dimensional micrographs from polished sections are available, each covering a fraction of a square millimetre~\cite{astm_e930}.
We show that DDEVD recovers stable risk curves in both domains where standard parametric fits diverge --- the same estimator working on temporal precipitation series and on spatially distributed grain-size data without inherent ordering.

Underpinning these empirical results is a rigorous derivation of the estimator's Mean Integrated Squared Error (MISE) and the resulting optimal bandwidth.
We further prove a stability criterion that links the estimator's reliable operating regime to the extreme value index $\gamma$ of the base distribution, yielding the explicit scaling law $m < C\,n^{1+\gamma/2}$, where $m$ is the number of blocks and $n$ the per-block size.
The result delineates a transparent boundary within which the DDEVD extrapolates reliably beyond the range of the data, in any of the three classical extreme-value domains.

\section*{Results}\label{sec:results}

\subsection*{Deciphering the risk of flash floods from sub-hourly records}

In the Alpine region, while daily rainfall totals are often adequately modelled by Weibull statistics, the stochastic behaviour of sub-hourly extremes --- specifically 5-minute intensity sums --- remains a distributional blind spot. Throughout, 'sub-hourly intensity' denotes the daily maximum of 5-minute precipitation sums. The annual block thus contains one such daily maximum per wet day.
These rapid bursts are the primary drivers of flash floods and urban drainage failure, yet they defy simple parametric modelling due to the complex interplay of convective instability and orography \cite{katz2002statistics}.
The geophysical processes governing these extreme events are highly localised and temporally variable, leading to tail behaviours that often deviate from standard assumptions.
For example, two stations in close proximity and at the same elevation, but separated by a mountain ridge, can exhibit fundamentally different extreme-value characteristics.
Furthermore, high-frequency 5-minute recording stations are a relatively recent addition to the monitoring network, leaving engineers with historically shallow datasets that are insufficient for classical asymptotic extrapolation.
Adding to all these problems, climate change is altering precipitation regimes, especially on the extreme ends, rendering long-term historical records less relevant for future risk assessment \cite{katz2002statistics}.

A network of semi-automated weather stations operated by Geosphere Austria (TAWES) provides high-resolution precipitation data across the Austrian Alps.
These stations record rainfall at 5-minute intervals, capturing the fine temporal structure of convective events \cite{geosphere2024klima10min, geosphere2024klimaday} and are located throughout the region.
They are a relatively recent addition to the monitoring network, being founded in 1992, resulting in at most 33 years of data per station.
Further, not every station has been equipped with 5-minute recording capabilities from the start, leading to even shorter effective records.

Unlike daily aggregates, the tail behaviour of these short-interval sums is not known \textit{a priori} and often exhibits tails inconsistent with standard Weibull assumptions: at a representative long-record station the Weibull fit assumed by the MEV baseline is adequate for daily totals but is clearly rejected for the 5-min intensities (Extended Data Fig.~\ref{edfig:raindist}).
We conducted a rigorous ``hindcast'' experiment to simulate the data-sparse conditions encountered when only short records are available or to account for temporal variability due to climate change.
We isolated short, 10-year windows from long-term station records and attempted to predict the 100-year Return Level ($RL_{100}$) using both the classical Generalized Extreme Value (GEV) distribution (via Block Maxima) and the DDEVD.

\begin{figure}[h]%
\centering
\includegraphics[width=0.98\textwidth]{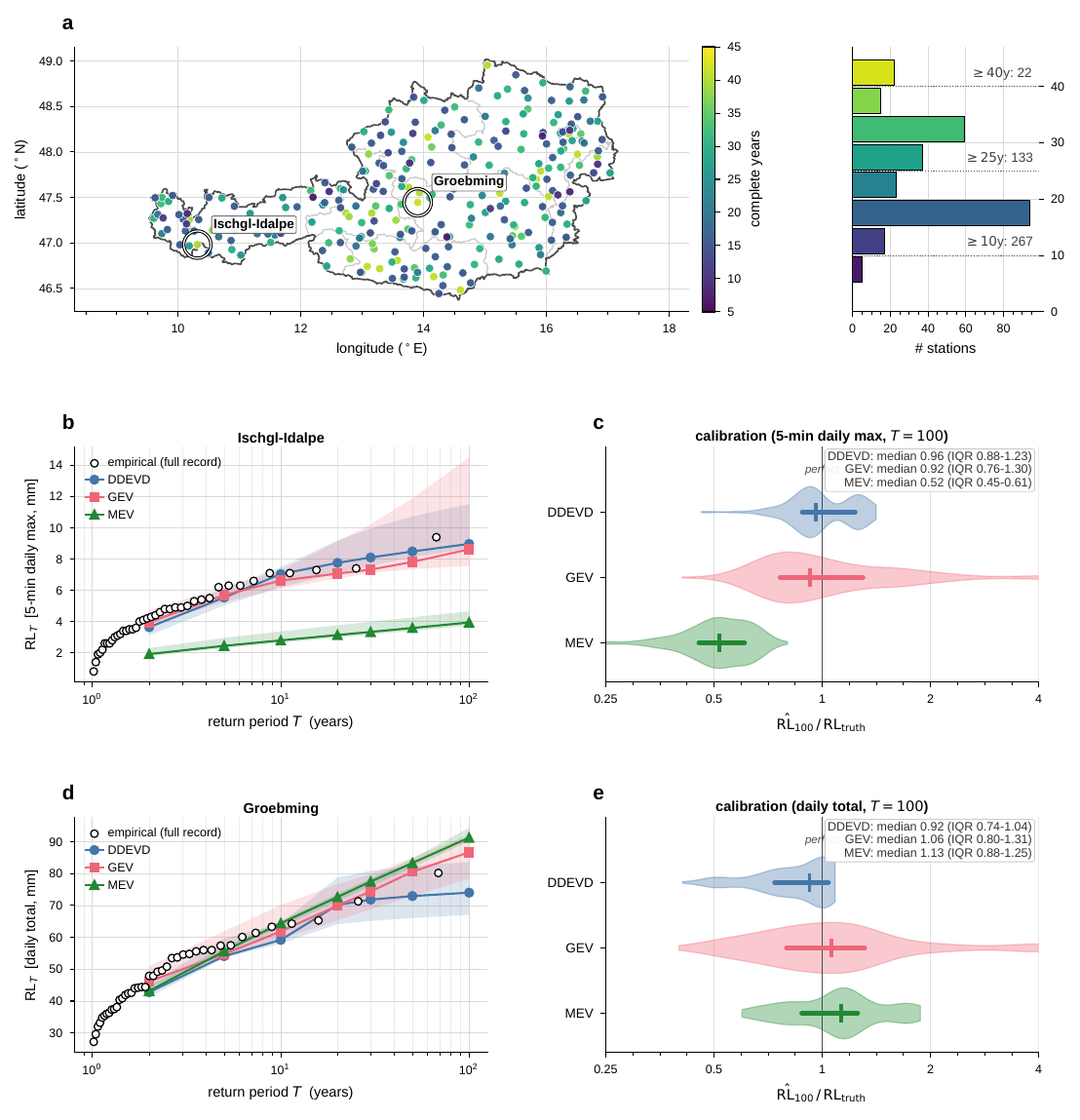}
\caption{\textbf{DDEVD return-level estimates are calibrated against the full-record truth across station archetypes and observation timescales, whereas the meteorological standard MEV under-predicts sub-hourly extremes and GEV is overdispersed.}
\textbf{a}, Spatial distribution of the $n = 378$ Austrian TAWES stations, coloured by record length (complete years). Right histogram: stations per 5-year record-length bin. Dotted lines mark the cumulative number of stations with $\geq 10$, $\geq 25$ and $\geq 40$ complete years. Black ring markers indicate the two representative stations: Ischgl-Idalpe (high-alpine) and Gr\"obming (mid-elevation).
\textbf{b}, Return-level curves at Ischgl-Idalpe for the 5-min daily maximum intensity ($\mathrm{RL}_T$, $\mathrm{mm}$). Curves are per-method medians across all 10-year hindcast windows. Shaded ribbons mark the $25\text{--}75\,\%$ inter-window range. DDEVD (blue), GEV (red), MEV with Weibull base (green). Open circles, Cunnane plotting positions from the full-record annual maxima.
\textbf{c}, Pooled calibration of 100-year return-level estimates $\widehat{\mathrm{RL}}_{100}/\mathrm{RL}_{\mathrm{truth}}$ as horizontal split-violins (median tick, $25\text{--}75\,\%$ IQR bar). DDEVD: $0.96$ ($0.88\text{--}1.23$); GEV: $0.92$ ($0.76\text{--}1.30$); MEV: $0.52$ ($0.45\text{--}0.61$) -- MEV under-predicts by ${\sim}50\,\%$.
\textbf{d, e}, As \textbf{b, c}, for daily total precipitation at Gr\"obming. DDEVD: $0.92$ ($0.74\text{--}1.04$); GEV: $1.06$ ($0.80\text{--}1.31$); MEV: $1.13$ ($0.88\text{--}1.25$). All three methods are well-calibrated on daily totals. DDEVD additionally calibrates on sub-hourly intensities where the parametric baselines fail. Hindcast protocol and reference definition in Methods.}
\label{fig:rainfall}
\end{figure}

As shown in Figure~\ref{fig:rainfall}, the classical parametric approach using GEV yielded 100-year return levels for 5-minute intensities with a very high variance.
The GEV estimate departed by more than $50\%$ from the full-record reference in roughly one in five 10-year windows. The DDEVD departed by that much in fewer than one in fifty.
The classical MEV estimator with a Weibull base distribution has been added to show that the parametric assumption is the main source of error: while MEV is well-calibrated on daily totals, it under-predicts 5-minute intensities by ${\sim}50\,\%$ at the 100-year event, a consequence of the Weibull tail being too light to capture the convective bursts that drive flash floods. Also, the shape of the distribution is not well captured by the Weibull distribution leading to wrong parameter estimates and thus wrong return level estimates, cf. Extended Data Fig.~\ref{edfig:raindist}. 
In principle this could be remedied by fitting a different parametric family, but the true tail shape is unknown and may vary across stations and time periods, making a non-parametric approach more robust.
By treating the per-wet-day maximum 5-minute sums as a metastatistical population --- retaining one maximum observation per wet day rather than only the annual maximum --- the estimator identified the underlying risk curve invisible to standard methods.
This effectively bridges the gap between modern high-frequency monitoring and century-scale safety requirements.

\subsection*{Estimating the largest grain from microstructural snapshots}

Refractory metals such as molybdenum, used across medical technologies including rotating X-ray anodes, radiation-therapy collimators, and structural implant components, are routinely processed by powder metallurgy.
The resulting polycrystalline microstructure is heterogeneous: complex thermomechanical transport during sintering produces grain populations whose tail --- the largest grains --- governs mechanical performance under cyclic thermal or mechanical load.
Fatigue crack initiation in such materials nucleates preferentially at the largest grains, making the upper tail of the grain-size distribution, rather than its bulk, the safety-relevant quantity.

The norm ASTM E930-18~\cite{astm_e930} recognizes this by noting that large grains present in the material can lead to premature failure and hence prescribes grain size limits that need to be met for certification.

Because each micrograph covers only a fraction of a square millimetre, the largest grain in a full component is necessarily larger than any grain captured in a single field of view: the safety-relevant tail must be \emph{extrapolated} from a limited area rather than observed directly.
This is an extreme-value extrapolation problem, and its reliability hinges on how the upper tail of the grain-size distribution is modelled.

Standard metallurgical practice either just takes the observed maximum grain size in good faith or (if more rigor is desired) assumes grain sizes follow a log-normal distribution and reads the tail off this bulk fit.
The assumption is convenient but unverified: a distribution calibrated on the central grain population carries no guarantee in the tail, and for new materials or processing routes the true tail shape is unknown --- for the present molybdenum specimen, log-normality is, although convenient, not the correct model (Extended Data Fig.~\ref{edfig:graindist}).

We therefore applied the DDEVD to SAM-segmented micrographs to estimate the largest grain expected over a given material area directly from the data, without assuming a tail shape, and benchmarked it against both the bulk log-normal extrapolation and a rigorous generalised extreme-value (block-maxima) fit.

\begin{figure}[h]%
\centering
\includegraphics[width=0.95\textwidth]{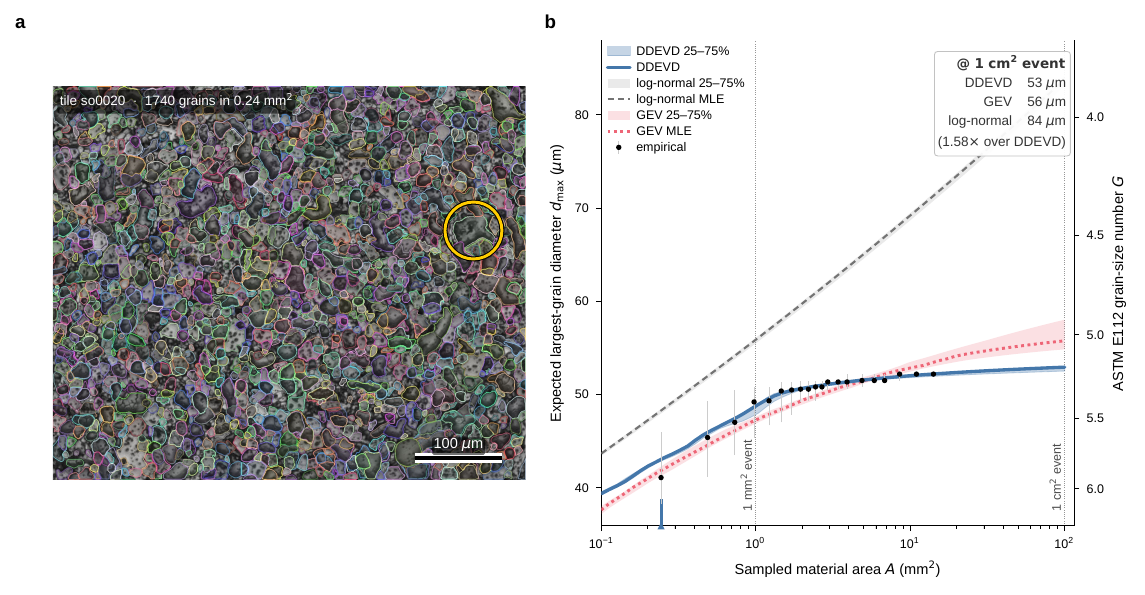}
\caption{\textbf{DDEVD applied to optical-metallography data agrees with the GEV block-maxima estimator on the expected largest grain in a given material area, and disagrees with the field-standard log-normal extrapolation by $58\,\%$ at the $1\,\mathrm{cm}^{2}$ event.}
\textbf{a}, Representative SAM-vit-l segmentation overlay of one optical micrograph of the polycrystalline microstructure ($1\,740$ valid grains in $0.24\,\mathrm{mm}^{2}$). Each coloured region is a single segmented grain, segmentation by Segment-Anything-Model vit-l (Meta AI)~\cite{kirillov2023segment} with default parameters, post-filtered to $100 \leq \mathrm{area} \leq 10^{5}\,\mathrm{px}^{2}$. Scale bar, $100\,\mu\mathrm{m}$.
\textbf{b}, Predicted median diameter $d_{\max}$ of the largest grain (equivalent circular diameter, $\mu\mathrm{m}$) as a function of sampled material area $A$ ($\mathrm{mm}^{2}$, log scale).
\emph{Blue solid line and band:} DDEVD median and $25\text{--}75\,\%$ range over $51$ fits ($20$ leave-one-tile-out jackknife $+$ $30$ half-data bootstrap resamples of the $58$ tiles, plus the central all-tiles fit; $n_{\mathrm{tot}} = 104\,321$ grains).
\emph{Grey dashed:} log-normal MLE to pooled grain areas.
\emph{Red dotted:} GEV MLE on the $58$ tile-level maxima of $\ln(\mathrm{area})$, with $k$-block scaling at $k = A/A_{\mathrm{tile}}$; fitted shape $\widehat{\xi} = -0.32$ (Weibull domain, finite endpoint $\sim 58\,\mu\mathrm{m}$).
\emph{Black points:} empirical block-pooled maxima (median, whiskers $25\text{--}75$ percentile; 200 random partitions per pool size).
Blue triangle on the abscissa: natural block area ($0.24\,\mathrm{mm}^{2}$), data-supported region extends to $14.1\,\mathrm{mm}^{2}$. Vertical guides at $A = 1\,\mathrm{mm}^{2}$ and $1\,\mathrm{cm}^{2}$, right ordinate gives the ASTM~E112 grain-size number $G$. At $1\,\mathrm{cm}^{2}$ the two EVS estimators agree (DDEVD $53\,\mu\mathrm{m}$, GEV $56\,\mu\mathrm{m}$; $G \approx 5.1$), while the bulk-fitted log-normal extrapolates to $84\,\mu\mathrm{m}$ (a $1.58$-fold over-prediction; log-normality rejected, Anderson--Darling $A^{2} = 171.3$, $\alpha < 0.01$). Fitting protocols and hyperparameters in Methods (SI, Section~\ref{sec:SI_methods_metallurgy}).}
\label{fig:metal}
\end{figure}

As shown in Fig.~\ref{fig:metal}, the bulk-fitted log-normal model---while adequate for the central grain population---does not reproduce the upper tail that governs mechanical performance.
Extrapolated to the $1\,\mathrm{cm}^{2}$ event, it predicts a largest grain of $84\,\mu\mathrm{m}$, a $1.58$-fold over-prediction relative to the DDEVD ($53\,\mu\mathrm{m}$).
The DDEVD recovers this tail directly from the data without imposing a distributional form, and agrees to within ${\sim}5\,\%$ with an independent generalised extreme-value fit to the tile-level block maxima ($56\,\mu\mathrm{m}$; shape $\widehat{\xi} = -0.32$, a bounded Weibull-domain tail).
Although the two estimators agree on the median, they differ sharply in precision under extrapolation: at the $1\,\mathrm{cm}^{2}$ event the DDEVD $25\text{--}75\,\%$ inter-quartile band spans only $0.4\,\mu\mathrm{m}$ ($0.8\,\%$ of the median) against $3.2\,\mu\mathrm{m}$ ($5.7\,\%$) for the GEV --- an $87\,\%$ narrower band (Fig.~\ref{fig:metal}b). Within the data-supported region (e.g.\ $1\,\mathrm{mm}^{2}$) the two bands are comparable; the DDEVD advantage emerges precisely where it matters --- extrapolating beyond the observed area --- because the GEV must propagate the uncertainty of a shape parameter estimated from only $58$ tile maxima, whereas the DDEVD constrains the tail using all ${\sim}10^{5}$ grains.
The convergence of two independent extreme-value estimators---one metastatistical, one block-maxima---on the same risk curve, where the bulk log-normal diverges by more than $50\,\%$, indicates that the fatigue-relevant grain-size tail is materially lighter than a bulk fit implies: the largest grains are bounded rather than log-normally heavy.
Because the field-standard log-normal is calibrated on the bulk rather than the extremes, it mis-estimates precisely the outlier grains that act as stress concentrators and initiate fatigue cracks under cyclic loading.
The DDEVD thus provides a non-parametric, EVT-consistent route to this tail from standard quality-control micrographs, reducing reliance on large-scale destructive testing.

\subsection*{Asymptotic stability and the limits of extrapolation}

The success of the DDEVD in these disparate regimes --- atmospheric physics and metallurgic imaging --- is not accidental.
It rests on a fundamental stability criterion we derived by optimizing the estimator's Mean Integrated Squared Error (MISE).
For the exact definitions of all occurring quantities and the full derivation see the SI, for special cases we will reference the relevant sections.

The key quantity is the positive definiteness of the Hessian matrix $\mathbf{Q}$ of the MISE with respect to the bandwidth vector.
We show (SI, Section~\ref{sec:SI_stability}) that $\mathbf{Q}$ is positive definite if and only if a scalar functional $D$, which depends on the tail behaviour of $F_X$, is positive.
Asymptotic analysis of $D$ for large block sizes $n$ yields the scaling law
\begin{equation}\label{eq:stability_main}
  m < C\,n^{1+\gamma/2},
\end{equation}
where $\gamma$ is the extreme value index of $F_X$~\cite{haan2006extreme,ferreira2015block} and $C$ is a positive constant depending on $F_X$ and the kernel $K$.
This result holds for all distributions in the Fr\'{e}chet ($\gamma > 0$) and short-tailed Weibull ($-1/2 < \gamma < 0$) domains of attraction, and extends to the Gumbel case ($\gamma = 0$) up to logarithmic corrections that arise from the special asymptotic representation of distributions in that domain (see SI, Section~\ref{sec:SI_stability}).

\begin{figure}[h]
\centering
\includegraphics[width=0.95\textwidth]{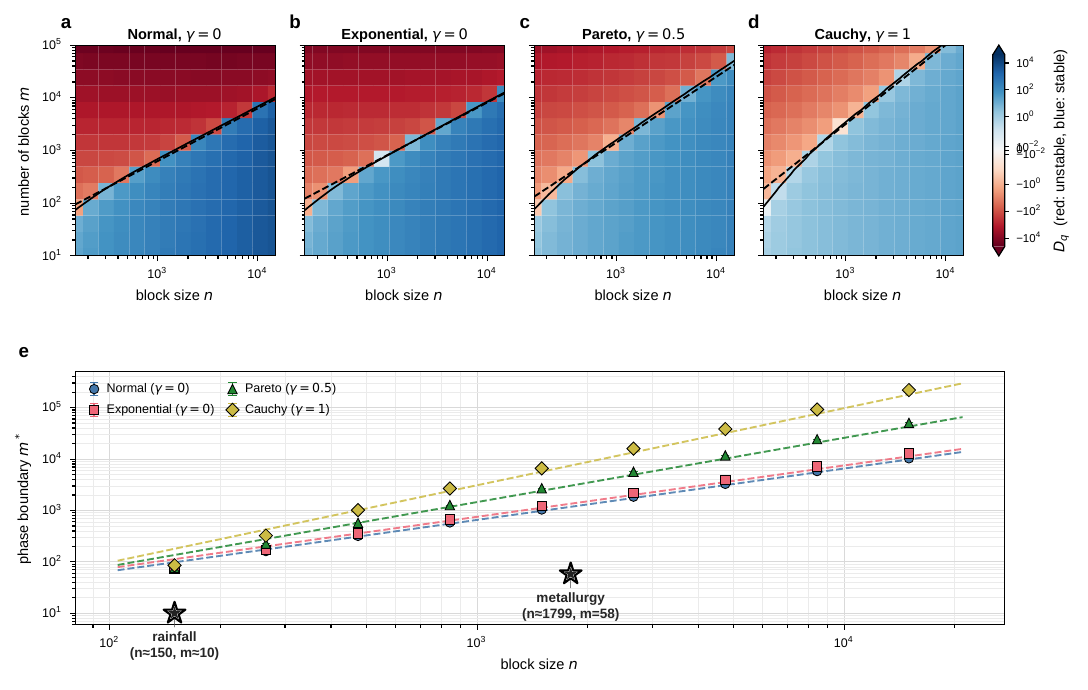}
\caption{\textbf{The DDEVD plug-in bandwidth optimiser has an empirically verifiable stability boundary that holds across the three extreme-value domains, and both real-world applications operate well inside it.}
\textbf{a--d}, Phase diagrams of the smallest eigenvalue $D_q$ of the plug-in Hessian on a log-spaced grid of block size $n$ (abscissa) and number of blocks $m$ (ordinate), for four reference distributions: \textbf{a}, Normal ($\gamma = 0$, Gumbel); \textbf{b}, Exponential ($\gamma = 0$, Gumbel); \textbf{c}, Pareto $\alpha = 2$ ($\gamma = 0.5$, Fr\'echet); \textbf{d}, Cauchy ($\gamma = 1$, Fr\'echet). Blue cells, $D_q > 0$ (stable); red cells, $D_q \leq 0$ (unstable). The solid black contour is the empirical $D_q = 0$ isoline; the dashed line is the theoretical boundary $m = C \cdot n^{1+\gamma/2}$ (SI, Section~\ref{sec:SI_stability}), with $C$ fit by least squares over the empirical boundary points. The two boundaries agree to within grid resolution across all four distributions.
\textbf{e}, Pooled view: empirical boundary points and theoretical scaling lines for all four distributions on a single log-log axis. Slopes $1+\gamma/2$ are recovered: $1.00$ for Normal/Exponential, $1.25$ for Pareto, $1.50$ for Cauchy. Black stars indicate the operating regimes of the two applications (Figs.~\ref{fig:rainfall},~\ref{fig:metal}): rainfall ($n \approx 150$, $m \approx 10$) and metallurgy ($n \approx 1\,799$, $m = 58$); both sit at least an order of magnitude below the stability boundary. Synthetic-grid protocol in Methods (SI, Section~\ref{sec:SI_methods_stability}).}
\label{fig:stability}
\end{figure}

This result provides a rigorous phase diagram for the estimator's applicability (Fig.~\ref{fig:stability}).
Crucially, both precipitation extremes ($\gamma > 0$) and microstructural anomalies fall within this stable region for typical observation windows.
For the precipitation application, with blocks of $n \approx 150$ rainy days per year and $m = 10$ years per hindcast window, the stability criterion yields a comfortable margin.
For the metallurgical data, with $\bar n \approx 1\,799$ grains per micrograph and $m = 58$ fields of view, the criterion is likewise satisfied.
This explicit stability criterion distinguishes the DDEVD from heuristic ``black box'' methods, offering a transparent boundary within which the method extrapolates reliably beyond the range of the data.

\section*{Discussion}\label{sec:discussion}

Classical EVT discards the vast majority of observations to fit asymptotic tail models, an approach that is well-founded asymptotically but problematic in the data-sparse regimes that are increasingly common in engineering practice.
The DDEVD addresses this by replacing asymptotic tail fitting with a kernel-based reconstruction of the full base distribution, inheriting the data efficiency of the metastatistical approach while avoiding parametric assumptions about tail shape.

In Alpine hydrology, the ability to derive stable 100-year return levels from a single decade of 5-minute records has immediate engineering value: flood-defence standards require return-level estimates that reflect the current climate, yet the effective stationarity window for sub-hourly precipitation is often shorter than a decade~\cite{katz2002statistics, westra2014future}; recent work extends the metastatistical framework to explicitly non-stationary and seasonal regimes~\cite{vidriosahagun2022hydrological, vidriosahagun2023nonstationary, falkensteiner2023accounting}. The DDEVD narrows this gap.
Building on the station-wise analysis presented here, a method for the spatial interpolation of 5-minute and other sub-daily precipitation extremes across Austria --- enabling return-level estimation at ungauged locations --- is in preparation by the authors.

In materials science, qualification of advanced materials by additive manufacturing or powder metallurgy routinely involves microstructures whose grain-size tails are not well described by the standard assumption of log-normal statistics.
The DDEVD offers a non-parametric pathway to characterise these tails from standard quality-control micrographs, reducing reliance on large-scale destructive testing.

Two limitations deserve mention, in keeping with recent critical appraisals of the metastatistical approach regarding bias, uncertainty bands, and goodness-of-fit~\cite{schmith2026critical}. The DDEVD is not a panacea: the kernel density estimator converges slowly in the far tail, and the bandwidth optimisation itself has a theoretical stability boundary.
The derived criterion ($m < C \cdot n^{1+\gamma/2}$) makes this boundary explicit and calculable, but for distributions with very large $\gamma$ or small sample sizes additional caution is warranted.
Additionally, if the mass of the distribution is concentrated at a finite point, but the tail is sufficiently heavy, then, although the data lies in the stable region, a transformation of the data may be required to recover a usable estimate of the tail.
We will discuss this further in the following section.
A second open question is the effect of temporal or spatial dependence within blocks on the MISE expansion. The present theory assumes block-internal independence, an assumption that holds well for annual precipitation maxima but may require scrutiny for shorter blocks or strongly correlated spatial data.

More broadly, the DDEVD framework is applicable wherever data can be organised into approximately exchangeable blocks and the block-maximum distribution is of primary interest.
The theoretical stability criterion derived here provides a practical guideline for assessing whether a given dataset falls within the reliable operating regime of the estimator.

\section*{Methods}\label{sec:methods}

For detailed definitions and derivations of objects in this section, see the Supplementary Information (SI).
For an even more detailed exposition, see the accompanying technical report~\cite{sandbichler2026ddevd_companion}.

\subsection*{The DDEVD Estimator}
Let $X_{ij}$ denote the $j$-th sample within the $i$-th block, where $i=1,\dots,m$ and $j=1,\dots,n_i$.

We define the Data Driven Extreme Value Distribution (DDEVD) estimator $\hat{F}_{h}(y)$ as the metastatistical aggregation of kernel-smoothed block distributions
\begin{equation}
\hat{F}_{h}(y) = \frac{1}{m}\sum_{i=1}^{m} \left( \frac{1}{n_{i}}\sum_{j=1}^{n_{i}} K\left(\frac{y-X_{ij}}{h_{i}}\right) \right)^{n_{i}}
\end{equation}
where $K(u) = \int_{-\infty}^u k(t)\,dt$ is a smooth cumulative distribution function (CDF) kernel derived from a symmetric, zero-mean probability density $k(t)$ (e.g., Gaussian), and $h = (h_1, \dots, h_m)$ is the vector of block-specific bandwidths.
The asymptotic analysis assumes that the base distribution $F_X$ is twice continuously differentiable on the support of interest and that the kernel density $k$ together with $[K^2]'$ admit finite first and second moments.
The standard Gaussian kernel used throughout this work satisfies these conditions, but other kernels meeting these criteria, such as a centered Weibull kernel, could also be used.

Unlike classical Block Maxima approaches which reduce the dataset to a set of $m$ maxima, this estimator retains the information from all $\sum_{i=1}^m n_i$ observations.
The selection of the bandwidth vector targets the quantile-restricted Mean Integrated Squared Error
\begin{equation}
  \mathrm{MISE}_q(\hat{F}_h) = \E \int_{F^{-1}_X(q)}^{\infty} \bigl(\hat{F}_h(y) - F(y)\bigr)^2\,dy,
\end{equation}
which concentrates the loss on the upper tail relevant for risk assessment for $0<q<1$.

\subsection*{Optimization of the Bandwidth Vector}
The selection of the bandwidth vector $h$ is critical for the stability of the exponentiated estimator.
We derived the asymptotic Mean Integrated Squared Error (MISE) of $\hat{F}_{h}$ up to order $\mathcal{O}(h^3)$ and obtained the optimal bandwidth vector $h_{opt}$ as the solution to the linear system $Q h_{opt} = -\frac{1}{2} c$,
where $Q$ is the Hessian matrix of the bias-variance functional and $c$ is the coefficient vector.

Since $Q$ and $c$ depend on the unknown underlying density $f_X$ and its derivative, we implement a fixed-point iterative plug-in algorithm

\begin{enumerate}
    \item \textbf{Initialization:} We set the initial bandwidth $h^{(0)}$ using Silverman's rule of thumb scaled for the block size, $h_{i}^{(0)} = 1.06\,\hat{\sigma}_i\, n_i^{-1/5}$~\cite{silverman2018density}.
    \item \textbf{Pilot Estimation:} At iteration $l$, we construct pilot density and CDF estimates from the current bandwidths,
    \begin{align}
    \hat{f}^{(l)}(y) &= \frac{1}{m}\sum_{i=1}^{m}\frac{1}{n_i\,h_i^{(l)}}\sum_{j=1}^{n_i} k\!\left(\frac{y-X_{ij}}{h_i^{(l)}}\right),\\
    \hat{F}^{(l)}(y) &= \frac{1}{m}\sum_{i=1}^{m}\frac{1}{n_i}\sum_{j=1}^{n_i} K\!\left(\frac{y-X_{ij}}{h_i^{(l)}}\right),
    \end{align}
    and substitute them into the integrands defining $\mathbf{c}$ and $\mathbf{Q}$ to obtain the plug-in estimates $\hat{\mathbf{c}}^{(l)}$ and $\hat{\mathbf{Q}}^{(l)}$ via numerical integration.
    \item \textbf{Update:} We compute the raw update $\mathbf{h}^* = -\frac{1}{2}\,(\hat{\mathbf{Q}}^{(l)})^{-1}\hat{\mathbf{c}}^{(l)}$ and apply a damped update rule $\mathbf{h}^{(l+1)} = (1-\lambda)\,\mathbf{h}^{(l)} + \lambda\, \mathbf{h}^*$ with relaxation parameter $\lambda = 0.5$.
    The damping provides regularisation against numerical oscillation in cases where the update map from $\mathbf{h}^{(l)}$ to $\mathbf{h}^*$ is not a contraction in the pre-asymptotic regime.
    \item \textbf{Termination:} The procedure repeats until the relative change in bandwidth satisfies $\|\mathbf{h}^{(l+1)} - \mathbf{h}^{(l)}\|_2 / \|\mathbf{h}^{(l)}\|_2 < \varepsilon$, with $\varepsilon$ chosen typically at $10^{-4}$.
\end{enumerate}

Experiments suggest that the plug-in algorithm converges reliably to the optimal bandwidth when the stability criterion is satisfied, and that the resulting DDEVD estimates are robust to the choice of initial bandwidth and relaxation parameter within reasonable ranges, however a rigorous proof of convergence is an open question~\cite{sandbichler2026ddevd_companion}.

\noindent\textbf{Per-group bandwidth solve.}
The bandwidth optimisation step has two practical limitations at large $m$: a direct solve of $\hat{\mathbf{Q}}\mathbf{h}^{*} = -\tfrac{1}{2}\hat{\mathbf{c}}$ is dominated by the $O(m^{2})$ cost of constructing $\hat{\mathbf{Q}}$ entry-wise via numerical integration, and at small $n$ the global operating point $(n, m)$ can cross the stability boundary of Eq.~\eqref{eq:stability_main} so that $\hat{\mathbf{Q}}$ fails to be positive definite.
We address both with a single documented approximation: when $m$ exceeds a user-set threshold $M$ (parameter \texttt{max\_bins}, default $10$ in the metallurgy application), the $m$ blocks are partitioned into $\lceil m / M \rceil$ contiguous groups, the iterative plug-in is run independently on each group, and the resulting per-block bandwidths are concatenated.
The per-group operating point is $(n, m_{\mathrm{group}})$ with $m_{\mathrm{group}} \leq M$, which sits well inside the stability boundary for every $n$ encountered in the applications --- the partitioning therefore ensures positive-definite per-group Hessians by construction, even when the global $(n, m)$ would not.
The concatenated bandwidth vector $\hat{\mathbf{h}}$ is the asymptotic MISE minimiser of the per-group sub-problems.
At finite $M$ the per-group bandwidth is a pre-asymptotic approximation: a direct comparison on the metallurgy grain-area data itself at the application setting ($m = 58$, mean $\bar n = 1\,799$, $M = 10$) gives $\hat{h}_{\mathrm{per\text{-}group}} \,/\, \hat{h}_{\mathrm{global}} = 0.713$ ($\hat{h}_{\mathrm{per\text{-}group}} = 0.0203$, $\hat{h}_{\mathrm{global}} = 0.0284$ on the log-transformed scale), approaching unity monotonically from below as $M \to m$, see Fig.~\ref{edfig:max_bins_comparison}.
The MISE surface is quadratic about its minimum so the prediction is much less sensitive to this bandwidth deviation than the ratio suggests, but the practitioner should treat the partitioned solve as a finite-sample approximation rather than an asymptotic exact match. 
The DDEVD estimator itself still averages over all $m$ blocks in the MEV form, only the bandwidth optimisation step is partitioned.
The partitioned solve is used uniformly for the metallurgy application (where $m$ ranges up to $\sim 10^{3}$ at high split factors), for the rainfall application $m = 10$ so the global solve is used directly.

Hence the per-group solve tends to under-estimate the bandwidth relative to the global solve, which is consistent with the intuition that a smaller number of blocks $m_{\mathrm{group}}$ leads to a more variable pilot density and thus a more aggressive (smaller) bandwidth.
In principle, the per-group solve could be improved by a correction factor that accounts for the difference in $m$ between the per-group and global settings, but we have not found this to be necessary in practice: the per-group solve is stable and gives good results on the metallurgy data without further tuning, and the rainfall application does not require it at all since $m = 10$ is below the threshold. Finding a quantitative (asymptotic or finite-sample) correction factor is an interesting open question for future work.

Detailed derivations of the Hessian matrix entries and stability conditions are provided in SI Sections~\ref{sec:SI_mise}--\ref{sec:SI_coefficients}.

\subsection*{Stability Criterion}
A key theoretical result of our analysis is the derivation of a stability criterion for the bandwidth optimisation.
For a distribution with extreme value index $\gamma > -1/2$, the optimisation is asymptotically stable provided the number of blocks $m$ scales such that $m < C\,n^{1+\gamma/2}$.
This criterion is derived for blocks of equal size $n$. For unequal block sizes, a perturbation argument (Weyl's inequality) suggests that the criterion continues to hold with $n$ replaced by the mean block size $\bar n$, provided the relative standard deviation $\sigma_n/\bar n$ is small. We state this as Conjecture~2.1 in the companion report~\cite{sandbichler2026ddevd_companion} (see also Supplementary Information Section \ref{sec:SI_stability}).
This is the regime relevant to both applications: yearly blocks of 5-minute rainfall events have moderate inter-annual variability in block size, and the metallurgy tiles have comparable grain counts after the artefact-removal filter.
Both datasets satisfy the criterion at $\gamma > -1/2$ --- the precipitation records with an empirical $\gamma \approx 0.1\text{--}0.2$ from the rainfall hindcast and the metallurgy grain-area distribution with $\widehat\xi = -0.32$ from the GEV fit on tile-level maxima (Fig.~\ref{fig:metal}) --- placing the sample sizes used here well inside the stable region.
The proof of this criterion is detailed in SI Section~\ref{sec:SI_stability}.

\subsection*{Data transformation and diagnostics}
The kernel CDF estimator assumes a base density that is well-conditioned on the working scale (see~\cite{sandbichler2026ddevd_companion}, Section~3, for the full transformation framework and diagnostics). Sub-hourly precipitation intensities are not: the wet-event distribution combines a near-singular concentration of probability just above the detection threshold ($\sim 0.1\,\mathrm{mm}$) with a heavy upper tail several orders of magnitude away (Extended Data Fig.~\ref{edfig:raindist}a,c). A single-bandwidth fit on the linear scale then either undersmooths the bulk and leaves the q-tail as a staircase, or oversmooths and erases the tail structure---a failure mode well documented in the kernel density estimation literature for boundary-spike distributions on positive support~\cite{wand1991transformations, park1990comparison,sheather2004density}.
The standard remedy is to fit on a transformed scale where the density is better-conditioned and back-transform the result: for a strictly monotone $C^2$ map $T:\mathcal{Y}\to\mathcal{Z}$, the back-transformed estimator $\hat{F}^T_Y(y) := \hat{F}_Z(T(y))$ targets the same $F_Y$ as a direct fit and preserves the DDEVD structure. For sub-hourly rainfall we adopt $T = \log$, which maps the boundary spike to $\approx -2$ (due to the detection threshold) and the tail to $\approx 3$ (due to the largest observed events), a much more manageable range for a single bandwidth to capture.
Daily totals pass the diagnostic on the linear scale and we fit them directly ($T = \mathrm{id}$): forcing a log transform on daily totals would worsen the iterative plug-in's conditioning, leading to occasional non-convergence on short windows.
The choice is justified empirically through three diagnostic checks on the working-scale fit: convergence of the plug-in iteration, bandwidth at or above the median tail-observation spacing, and absence of multi-modal spikes in the pilot density. Failure of any of these on the linear scale and success on the log scale identifies the transformation as appropriate.

Return levels for return period $\tau$ are computed on the working scale and back-transformed, $\widehat{\mathrm{RL}}_\tau = T^{-1}\!\bigl(\hat{F}_Z^{-1}(1 - 1/\tau)\bigr)$, where the inverse $\hat{F}_Z^{-1}$ is evaluated numerically by bisection on the DDEVD CDF.

Figures~\ref{edfig:graindist} and~\ref{edfig:raindist} show the working-scale diagnostics for the metallurgy and rainfall applications, respectively. The log transform is clearly justified for the 5-minute intensities, while the daily totals are well-behaved on the original scale. The DDEVD estimates are consistent across transforms, but the log transform is necessary to recover a meaningful tail estimate for the 5-minute data.

\subsection*{Data Acquisition and Processing}

\textbf{Precipitation Data:}
High-frequency precipitation records were obtained from Geosphere Austria's semi-automated weather station network (TAWES). For the sub-hourly hindcast we selected the 21 stations in the Austrian Tyrol with at least 25 years of complete 5-minute records (out of 378 TAWES stations); for the daily-totals control we used all stations with at least 25 years of continuous daily records.
To assess estimator performance under data scarcity, we performed a hindcast analysis in which a 10-year window was slid over each station record in one-year steps; the pooled sub-hourly calibration comprises $192$ overlapping windows drawn from the six stations with at least $25$ years of 5-minute records. Within every window we computed DDEVD, MEV (Weibull base distribution fitted by MLE per block), and GEV (fitted to annual maxima by MLE) estimates of the 100-year return level on the per-year block-maxima of daily 5-minute sums and daily totals. DDEVD is fitted with \texttt{h\_opt\_position="quantile\_0.95"} (q-MISE optimised on the upper $5\,\%$ tail) and a per-variable data transform: \texttt{transform="log"} for the sub-hourly intensity series (justified by the boundary-spike diagnostic, see Section~\ref{sec:SI_methods_rainfall}) and \texttt{transform=None} (linear scale) for daily totals, which pass the diagnostic on the original scale; baseline GEV and MEV fits are reused across transforms within each window.
The empirical reference (``ground truth'') is the return level estimated from the full station record by the Cunnane plotting-position formula $p_{k}=(k-0.4)/(N+0.2)$, which makes no distributional assumption on the maxima but is itself a finite-sample estimate. To quantify this reference uncertainty we additionally compute a non-parametric bootstrap of the full-record annual maxima ($B = 500$ resamples with replacement of the $N$ available years); the $25\text{--}75\,\%$ inter-resample range of the Cunnane $\mathrm{RL}_{T}$ is reported alongside the point estimate in $\texttt{truth\_<var>.csv}$ and overlaid as a vertical band on Fig.~\ref{fig:rainfall}c,e. Calibration ratios $\widehat{\mathrm{RL}}_{T}/\mathrm{RL}_{\mathrm{truth}}$ within this band are statistically indistinguishable from the reference at the chosen confidence level.

\textbf{Metallurgical Data:}
Microstructural data were acquired by optical microscopy on a polished cross-section of a sintered, production-grade molybdenum specimen produced and provided by the materials testing laboratory of Plansee SE (Reutte, Austria). Each field of view is a tile of $1\,300 \times 1\,080$ pixels; 108 such tiles were processed.
Grain segmentation was performed using the Segment Anything Model (SAM)~\cite{kirillov2023segment}, a deep-learning-based instance segmentation approach.
Grain areas were calculated from the pixel count of segmented domains, with artefacts removed by excluding segments outside the range $[100, 10^{5}]\,\mathrm{px}^{2}$ (eliminating noise specks below the resolution limit and matrix/background segments above the physical grain scale).
After filtering and exclusion of boundary/partial tiles, $m = 58$ valid interior tiles were retained for the analysis ($n_{\mathrm{tot}} = 104\,321$ grains; per-tile mean $\bar n = 1\,799$). Each valid tile constitutes one DDEVD block, and the DDEVD configuration follows \texttt{h\_opt\_position="quantile\_0.9"}, \texttt{target\_distribution=stats.norm} (Gaussian pilot on $\ln(\mathrm{area})$), \texttt{transform="log"}, and \texttt{max\_bins=10} (per-group bandwidth solve as above). Uncertainty quantification in Fig.~\ref{fig:metal} pools two resampling modes implemented in the supporting pipeline: (i) leave-one-tile-out jackknife ($B_{\mathrm{j}} = 20$ resamples) for DDEVD; and (ii) half-data bootstrap ($B_{\mathrm{h}} = 30$ resamples drawing $\lceil m/2 \rceil$ blocks without replacement) applied uniformly to DDEVD, log-normal MLE, and GEV MLE so that all three competing methods carry an inter-resample $25\text{--}75\,\%$ band on equal footing. The empirical reference is the pooled block-maxima distribution computed from the same $58$ tiles. Full hyperparameters in Section~\ref{sec:SI_methods_metallurgy}.


\bmhead{Data availability}
The synthetic datasets and minimal working examples used for the numerical experiments are available alongside the code at \url{https://github.com/DataLabHell/ddevd}. The quality-controlled station precipitation data are openly available from GeoSphere Austria's data hub under the CC BY 4.0 licence: daily totals from the \texttt{klima-v2-1d} dataset~\cite{geosphere2024klimaday} and 10-minute values from the \texttt{klima-v2-10min} dataset~\cite{geosphere2024klima10min}. The 5-minute resolution series used for the sub-hourly analysis is not part of the open archive and was provided by GeoSphere Austria on request. The micrograph segmentations supporting the metallurgy analysis are openly available at Zenodo (\url{https://doi.org/10.5281/zenodo.20621678}) under the Apache-2.0 licence.

\bmhead{Code availability}
The Python implementation of the DDEVD estimator, including the iterative bandwidth selection and the analysis scripts that reproduce the figures, is available at \url{https://github.com/DataLabHell/ddevd}.

\bmhead{Acknowledgements}
We thank Kevin Ploner and Norbert Köpfle from the materials testing laboratory of Plansee SE for providing the molybdenum specimen and metallographic preparation. We further thank Harald Schellander and Alexander Radlherr from GeoSphere Austria for providing the precipitation data and domain expertise.

\bmhead{Author contributions}
M.S. developed the theoretical framework and derived MISE and optimal bandwidth. T.H. conceived the metastatistical application and the basic structure of the estimator.

\bmhead{Competing interests}
The extreme-value methodology developed in this work --- including the DDEVD estimator, its mean integrated squared error analysis, and the associated stability criterion --- is derived in a self-contained mathematical companion paper~\cite{sandbichler2026ddevd_companion} (Sandbichler \& Hell, \texttt{arXiv:2605.21416}) independently of any application. The present manuscript demonstrates this methodology on two empirical datasets that became available through cooperations of Data Lab Hell GmbH, which we disclose below.
Data Lab Hell GmbH conducted research on rainfall extremes within a funded cooperation with GeoSphere Austria; the precipitation application reported here uses TAWES records made available through this cooperation. Geosphere Austria provided access to the TAWES records and domain expertise but had no role in study design, data analysis, decision to publish, or manuscript preparation.
Data Lab Hell GmbH additionally maintains a research and development cooperation with Plansee SE that includes work on the analysis of metallographic micrographs. Plansee SE provided the molybdenum specimen and metallographic preparation used in this study and had no role in study design, data analysis, decision to publish, or manuscript preparation. The industrial collaboration did not influence the theoretical development of the method.
The authors declare no further competing interests.



\clearpage
\begingroup
\setcounter{figure}{0}
\renewcommand{\figurename}{Extended Data Fig.}
\renewcommand{\thefigure}{\arabic{figure}}

\begin{center}{\Large \textbf{Extended Data}}\end{center}
\bigskip

\begin{figure}[h]
\centering
\includegraphics[width=0.92\textwidth]{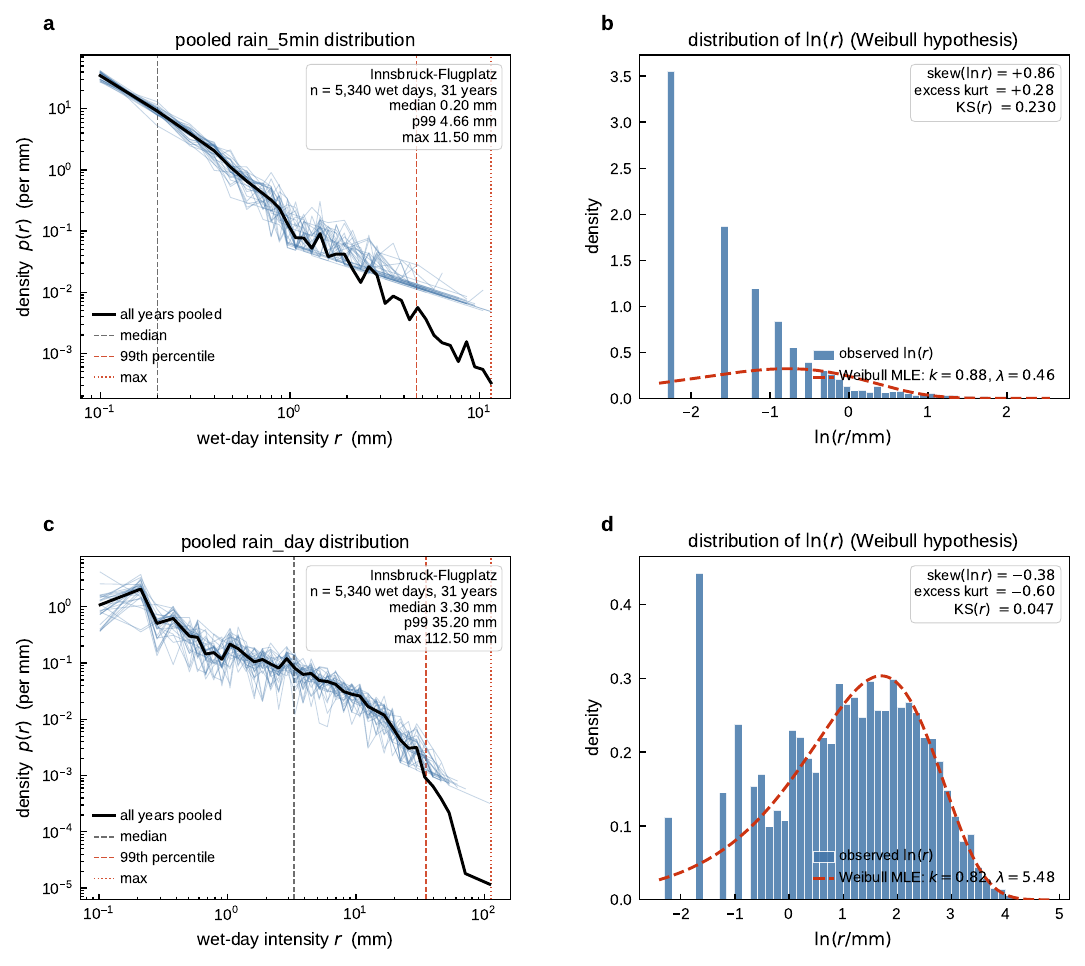}
\caption{\textbf{Empirical wet-day intensity distributions motivate a non-parametric base distribution and a logarithmic working scale.}
Pooled wet-day intensity distributions at Innsbruck-Flugplatz ($31$ complete years, $n = 5\,340$ wet days), for the 5-min daily-maximum intensity (\textbf{a, b}) and the daily total (\textbf{c, d}).
\textbf{a, c}, Pooled density on log--log axes (black) with per-year curves (thin blue); vertical lines mark the median, 99th percentile and maximum. Both variables span $2\text{--}3$ decades with probability mass concentrated just above the $\sim 0.1\,\mathrm{mm}$ gauge-resolution floor --- the boundary-spike, positive-support regime in which a single linear-scale kernel bandwidth cannot resolve the bulk and the tail simultaneously, motivating the logarithmic working scale.
\textbf{b, d}, Histogram of $\ln r$ with the back-transformed Weibull maximum-likelihood fit assumed by the MEV baseline (red dashed). The Weibull approximation is adequate for daily totals ($\mathrm{KS}(r) = 0.047$; near-symmetric $\ln r$, skew $-0.38$) but breaks down for the 5-min intensities ($\mathrm{KS}(r) = 0.230$; right-skewed, skew $+0.86$), motivating the assumption-free base distribution of the DDEVD. KS, Kolmogorov--Smirnov statistic of the Weibull fit to $r$.}
\label{edfig:raindist}
\end{figure}

\begin{figure}[h]
\centering
\includegraphics[width=0.98\textwidth]{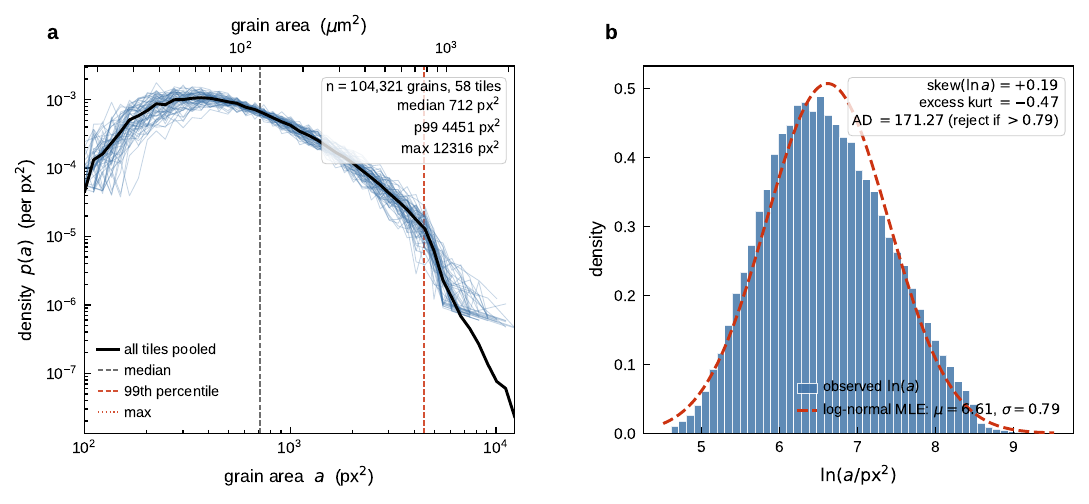}
\caption{\textbf{The pooled grain-area distribution spans several decades and is not log-normal, motivating a non-parametric tail estimate on a logarithmic scale.}
SAM-segmented grain areas pooled over the $58$ valid tiles ($n_{\mathrm{tot}} = 104\,321$ grains, after the merge-artefact blocklist of Fig.~\ref{fig:metal}).
\textbf{a}, Pooled density on log--log axes (black) with per-tile curves (thin blue); vertical lines mark the median, 99th percentile and maximum. Grain areas span ${\sim}2.5$ decades ($10^{2}\text{--}10^{4}\,\mathrm{px}^{2}$), so the DDEVD is fitted on $\ln(\mathrm{area})$. Top axis, equivalent area in $\mu\mathrm{m}^{2}$.
\textbf{b}, Histogram of $\ln(\mathrm{area})$ with the log-normal (Gaussian-in-$\ln$) maximum-likelihood fit (red dashed). Log-normality is rejected (Anderson--Darling $A^{2} = 171$ against the $\alpha = 0.05$ critical value $0.79$; residual skew $+0.19$), so the field-standard bulk log-normal cannot be trusted in the safety-relevant tail (Fig.~\ref{fig:metal}).}
\label{edfig:graindist}
\end{figure}

\begin{figure}[h]%
\centering
\includegraphics[width=0.95\textwidth]{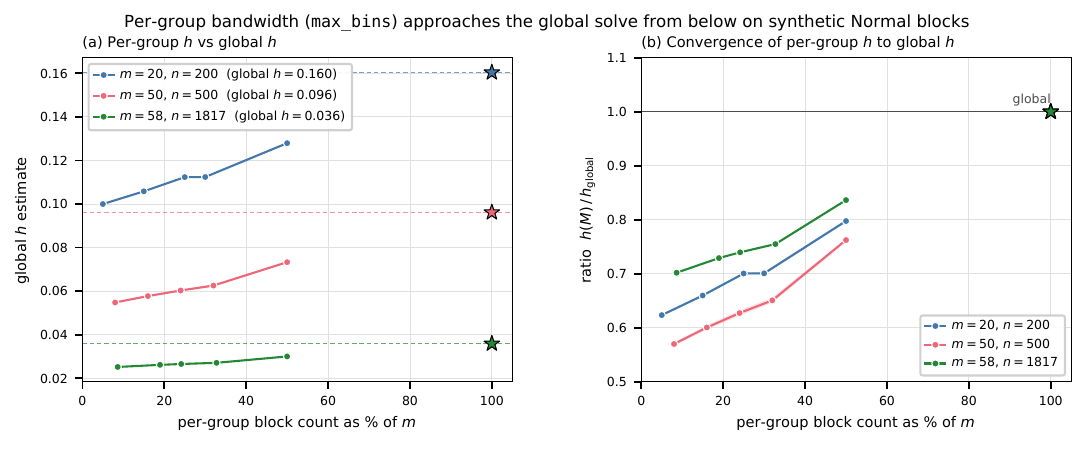}
\caption{\textbf{Per-group bandwidth approximation converges monotonically to the global solve on synthetic Normal data.}
Each curve corresponds to one $(m, n)$ configuration: $(20, 200)$, $(50, 500)$, and $(58, 1817)$ (the metallurgy application operating point); shaded bands show $\pm 1\sigma$ across three random seeds; stars at $100\,\%$ mark the single-solve global reference.
\textbf{a}, Global bandwidth estimate $\hat{h}$ as a function of per-group block count $m_{\mathrm{group}}$ expressed as a percentage of $m$, for $\mathcal{N}(0,1)$ blocks fitted without transformation. Horizontal dashed lines repeat the single-solve value for each configuration; per-group estimates approach the global reference from below as $m_{\mathrm{group}} \to m$.
\textbf{b}, Ratio $\hat{h}(m_{\mathrm{group}}) / \hat{h}_{\mathrm{global}}$; the horizontal line at unity is the global reference. The ratio increases monotonically toward unity and does not overshoot, confirming the per-group solve is a conservative (downward-biased) finite-sample approximation of the global bandwidth.}
\label{edfig:max_bins_comparison}
\end{figure}

\endgroup

\clearpage
\appendix
\renewcommand{\thesection}{S\arabic{section}}
\setcounter{section}{0}

\begin{center}
  {\Large \textbf{Supplementary Information}}\\[6pt]
  {\large Robust estimation of extreme event risks from sparse data\\
  in climate and materials science}\\[6pt]
  Michael Sandbichler, Tobias Hell
\end{center}

\tableofcontents
\bigskip

\paragraph{Companion document.} The full mathematical derivation underlying the DDEVD --- including detailed proofs of the asymptotic MISE expansion, the optimal bandwidth theorem, the stability criterion, the closed-form coefficient formulae, and accompanying numerical experiments --- is provided as a self-contained companion technical report~\cite{sandbichler2026ddevd_companion} (\texttt{arXiv:2605.21416}). This Supplementary Information sketches the same material in a notation aligned with the main text; theorem labels match the companion report wherever practical, and we refer the reader to it for any proof details omitted here.
\bigskip

\section{The DDEVD Estimator: Definition and Target}\label{sec:SI_definition}

\subsection{Problem setting}

Let $X_{ij}$, $i = 1,\ldots,m$, $j = 1,\ldots,n_i$, be i.i.d.\ samples from a base distribution with CDF $F_X$.
The block structure models the natural partition of observations (e.g., years of daily precipitation records, or separate micrograph fields of view).
The quantity of interest is the \emph{metastatistical extreme value distribution}
\begin{equation}\label{eq:mev_target}
  F(y) = \frac{1}{m}\sum_{i=1}^{m} F_X^{n_i}(y),
\end{equation}
which is the average CDF of block maxima when the base distribution is exactly $F_X$.
The DDEVD estimates $F$ non-parametrically by replacing $F_X$ with a kernel density estimate:
\begin{equation}\label{eq:ddevd}
  \hat{F}_{h}(y) = \frac{1}{m}\sum_{i=1}^{m}
    \left(\frac{1}{n_i}\sum_{j=1}^{n_i} K\!\left(\frac{y - X_{ij}}{h_i}\right)\right)^{n_i},
\end{equation}
where $K(u) = \int_{-\infty}^{u} k(t)\,\dd t$ is a CDF kernel derived from a symmetric, zero-mean probability density $k$, and $h = (h_1,\ldots,h_m)^T$ is the bandwidth vector (one bandwidth per block).

\subsection{Connection to the MEV framework}

The target \eqref{eq:mev_target} is the Metastatistical Extreme Value (MEV) distribution introduced by Marani \& Ignaccolo~\cite{marani2015metastatistical}.
Existing MEV implementations substitute a \emph{parametric} estimate for $F_X$ (typically Weibull).
The DDEVD is the non-parametric counterpart: it substitutes the kernel CDF estimator, removing the parametric constraint at the cost of requiring a bandwidth selection rule.

\section{Asymptotic MISE Expansion and Optimal Bandwidth}\label{sec:SI_mise}

\subsection{Bias--variance decomposition}

The MISE of the DDEVD is
\[
  \MISE(\hat{F}_h) = \E\int \bigl(\hat{F}_h(y) - F(y)\bigr)^2\,\dd y
  = \int \Bigl[\Bias^2(\hat{F}_h)(y) + \Var(\hat{F}_h)(y)\Bigr]\,\dd y.
\]
We define per-block quantities $\Bias_i(y) = \E[F_{i,h}^{n_i}(y)] - F_X^{n_i}(y)$ and
$\Var_i(y) = \Var(F_{i,h}^{n_i})(y)$, so that
\[
  \Bias(\hat{F}_h)(y) = \frac{1}{m}\sum_i \Bias_i(y), \qquad
  \Var(\hat{F}_h)(y) = \frac{1}{m^2}\sum_i \Var_i(y).
\]

\subsection{Asymptotic expansion}

\begin{theorem}[Asymptotic MSE expansion]\label{thm:SI_expansion}
Under the assumptions that $F_X$ is twice continuously differentiable and $K$ is a CDF kernel with a zero-mean density ($\mu_{K,1} = 0$) and finite second moments for $K$ and $K^2$, the MSE at point $y$ admits the expansion
\begin{align*}
  \MSE(y) &= \frac{1}{m^2}\Bigl[\Bigl(\sum_i b_{0,i}\Bigr)^2 + \sum_i V_{0,i}\Bigr]\\
  &\quad + \frac{1}{m^2}\Bigl[2\Bigl(\sum_i b_{0,i}\Bigr)\Bigl(\sum_i \bar h_i\, b_{1,i}\Bigr) + \sum_i \bar h_i\, V_{1,i}\Bigr]\\
  &\quad + \frac{1}{m^2}\Bigl[\Bigl(\sum_i \bar h_i\, b_{1,i}\Bigr)^2
     + 2\Bigl(\sum_i b_{0,i}\Bigr)\Bigl(\sum_i \overline{h_i^2}\, b_{2a,i} + \bar h_i^2\, b_{2b,i}\Bigr)
     + \sum_i\bigl(\overline{h_i^2}\, V_{2a,i} + \bar h_i^2\, V_{2b,i}\bigr)\Bigr]\\
  &\quad + \mathcal{O}(h^3),
\end{align*}
where $\bar h_i = n_i^{-1}\sum_j h_{ij}$, $\overline{h_i^2} = n_i^{-1}\sum_j h_{ij}^2$,
and the coefficients $b_{s,i}(y)$, $V_{s,i}(y)$ are derived in Supplementary Section~\ref{sec:SI_coefficients}.
\end{theorem}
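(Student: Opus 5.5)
The plan is to reduce the pointwise MSE to per-block bias and variance expansions, and then assemble them using independence across blocks. Since blocks are independent, $\MSE(y) = \Bias^2(\hat F_h)(y) + \Var(\hat F_h)(y)$, with $\Bias = m^{-1}\sum_i \Bias_i$ and $\Var = m^{-2}\sum_i \Var_i$, as in the decomposition above. It therefore suffices to show, for each block $i$,
\[
\Bias_i(y) = b_{0,i} + \bar h_i\, b_{1,i} + \overline{h_i^2}\, b_{2a,i} + \bar h_i^2\, b_{2b,i} + \mathcal{O}(h^3),
\]
and the analogous expansion for $\Var_i$ with coefficients $V_{0,i}, V_{1,i}, V_{2a,i}, V_{2b,i}$. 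Squaring the summed bias and collecting terms by total order in $h$ then gives exactly the three displayed brackets. Note that a term $b_{0,i}$ at order zero appears because the target is $F_X^{n_i}$; I will allow it to absorb any non-vanishing zeroth-order discrepancy.

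To obtain the per-block expansions, I would allow a bandwidth $h_{ij}$ for each observation, which explains the notation $\bar h_i$ and $\overline{h_i^2}$. Write $F_{i,h}(y) = n_i^{-1}\sum_j K((y-X_{ij})/h_{ij})$ and let $Z_j = K((y-X_{ij})/h_{ij})$. These are independent, so $\E[F_{i,h}^{n_i}]$ and $\E[F_{i,h}^{2n_i}]$ can be expanded by the multinomial theorem into products of the moments $\E Z_j$ and $\E Z_j^2$, with higher powers handled the same way. The single-kernel moments follow from the standard substitution $t=(y-x)/h$ and integration by parts, followed by a Taylor expansion of $F_X$ to second order:
\[
\E K\bigl(\tfrac{y-X}{h}\bigr) = F_X(y) - h\mu_{K,1} f_X(y) + \tfrac{h^2}{2}\mu_{K,2} f_X'(y) + o(h^2),
\]
where $\mu_{K,1}=0$. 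The same computation for $K^2$ uses the moments of $[K^2]'$, and its first moment does not vanish. This is the source of the first-order terms $V_{1,i}$ and, through cross moments, $b_{1,i}$.

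Substituting these moments back, each moment of $Z_j$ has the form $A + B h_{ij} + C h_{ij}^2$. A product over $j$ then expands to first order as $\sum_j h_{ij}$, which is proportional to $\bar h_i$. At second order it produces both $\sum_j h_{ij}^2$, giving $\overline{h_i^2}$, and $\sum_{j\neq l} h_{ij}h_{il}$. Using $\sum_{j\ne l} h_{ij}h_{il} = n_i^2\bar h_i^2 - n_i\overline{h_i^2}$, this splits into the $2a$ and $2b$ structure. The coefficients are then read off and deferred to Section~\ref{sec:SI_coefficients}.

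I expect the main obstacle to be the combinatorial bookkeeping of the multinomial expansion of $\E[F_{i,h}^{n_i}]$ and $\E[F_{i,h}^{2n_i}]$, together with uniform control of the $\mathcal{O}(h^3)$ remainder. The remainder must be bounded for fixed $n_i$ using the $C^2$ assumption and the finite kernel moments. Integrability in $y$ is only needed later, for the MISE. I would first try writing $Z_j = \E Z_j + (Z_j - \E Z_j)$ and expanding in the centered fluctuations, which keeps the number of distinct terms manageable.
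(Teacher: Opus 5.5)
Your plan is correct for the structural content of the theorem, but it reaches it by a different route from the paper.

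\textbf{The paper's route.} The paper expands $F_{i,h}^{n_i}$ binomially around $\E F_{i,h}$. It then replaces the central moments of $F_{i,h}-\E F_{i,h}$ by Gaussian ones (CLT). In that approximation all $h$-dependence enters through two scalars:
$\E F_{i,h} = F_X + \tfrac12\overline{h_i^2}\,\mu_{K,2}f_X' + o(h^2)$ and $n_i\Var F_{i,h} = F_X(1-F_X) - \bar h_i\,\alpha + \overline{h_i^2}\,\beta + o(h^2)$.
This explains several features of Section~\ref{sec:SI_coefficients}:
\begin{itemize}
\item why $\alpha$ and $\beta$ appear there;
\item why $b_{1,i}$ and the first part of $b_{2a,i}$ carry the same factor;
\item why the $\bar h_i^2$ terms can only come from the square of the first-order variance perturbation, since the first-order mean shift vanishes when $\mu_{K,1}=0$.
\end{itemize}
Evaluating Gaussian moments then produces the closed forms, e.g.\ $b_{0,i}\approx F_X^{n_i}\bigl(\sqrt{F_X}\,\ee^{n_i(1-F_X)/2}-1\bigr)$.

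\textbf{Your route.} Your exact multinomial expansion needs no normal approximation. It shows that the $\bar h_i$, $\overline{h_i^2}$, $\bar h_i^2$ structure holds exactly as $h\to 0$ at fixed $n_i$, rather than being an artefact of the CLT. It even identifies $b_{0,i} = \E[(B/n_i)^{n_i}] - F_X^{n_i}$ exactly, with $B\sim\mathrm{Bin}(n_i,F_X(y))$.

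The price is that your coefficients are multi-index sums over the moments of $[K^p]'$ for all $p\le 2n_i$. These are finite, since $0\le [K^p]' = pK^{p-1}k\le p\,k$, but they are not the closed forms that the Watson-lemma analysis of $D$ in Section~\ref{sec:SI_stability} requires. To obtain those you would have to add the paper's CLT step. After that the coefficients are only large-$n_i$ approximations, as the ``$\approx$'' in Section~\ref{sec:SI_coefficients} signals.

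\textbf{Two small corrections.}

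First, the combinatorial bookkeeping you expect to be the main obstacle is unnecessary for the structure. Because the $X_{ij}$ are i.i.d., $\E[F_{i,h}^{n_i}]$ and $\E[F_{i,h}^{2n_i}]$ are symmetric functions of $(h_{i1},\dots,h_{in_i})$. Their second-order Taylor polynomials are therefore automatically of the form $A + B\sum_j h_{ij} + C\sum_j h_{ij}^2 + D\sum_{j\ne l}h_{ij}h_{il}$. This holds only for the full multinomial sum, not for an individual product $\prod_j \E[Z_j^{k_j}]$, where the coefficient of $h_{ij}$ depends on $k_j$. Your sentence about ``a product over $j$'' should be read in that sense.

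Second, $F_X\in C^2$ together with second kernel moments yields a remainder $o(h^2)$, not $\mathcal{O}(h^3)$. The latter requires a bounded third derivative of $F_X$ and a finite third absolute kernel moment; this looseness is already present in the statement. In either case the remainder constants depend on $n_i$, and neither your route nor the paper's controls this uniformly as $n_i\to\infty$.
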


The proof proceeds by a binomial expansion of $F_{i,h}^{n_i}$ around $\E[F_{i,h}]$, followed by approximation of the resulting central moments via the CLT.

\subsection{Optimal bandwidth}

\begin{theorem}[Optimal bandwidth vector]\label{thm:SI_bandwidth}
The bandwidth vector $\mathbf{h}_{opt}$ minimising the second-order approximation of the MISE is
\begin{equation}\label{eq:SI_hopt}
  \mathbf{h}_{opt} = -\tfrac{1}{2}\,\mathbf{Q}^{-1}\mathbf{c},
\end{equation}
where $\mathbf{c} \in \R^m$ and $\mathbf{Q} \in \R^{m\times m}$ have components
\begin{align}
  c_i &= \int \Bigl(2\sum_j b_{0,j}(y)\cdot b_{1,i}(y) + V_{1,i}(y)\Bigr)\,\dd y,\\
  Q_{kl} &= \int \Bigl(b_{1,k}(y)\,b_{1,l}(y)
    + \delta_{kl}\bigl(2\sum_j b_{0,j}(y)\cdot b_{2,k}(y) + V_{2,k}(y)\bigr)\Bigr)\,\dd y,
\end{align}
with $b_{2,k} = b_{2a,k} + b_{2b,k}$ and $V_{2,k} = V_{2a,k} + V_{2b,k}$.
Moreover, $\mathbf{h}_{opt}$ is block-constant, i.e., $h_{ij} = h_i$ for all $j$ within block~$i$.
\end{theorem}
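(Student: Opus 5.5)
Starting from Theorem~\ref{thm:SI_expansion}, I would integrate the pointwise MSE expansion over $y$ and truncate at second order in the bandwidths. The result is a function of the per-observation bandwidths $h_{ij}$, entering only through $\bar h_i$, $\bar h_i^2$ and $\overline{h_i^2}$. The zeroth-order term $\int \frac{1}{m^2}[(\sum_i b_{0,i})^2+\sum_i V_{0,i}]\,\dd y$ does not depend on $h$ and can be dropped. The first-order term is $\frac{1}{m^2}\sum_i \bar h_i\, c_i$, with $c_i$ exactly as defined in the statement once we interchange sum and integral. This requires integrability of $b_{0,j}b_{1,i}$ and $V_{1,i}$, which I take from the moment assumptions on $k$ and $[K^2]'$ and the closed forms in Section~\ref{sec:SI_coefficients}. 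The second-order term splits into two pieces. The cross term $(\sum_i \bar h_i b_{1,i})^2$ integrates to $\sum_{k,l}\bar h_k\bar h_l\int b_{1,k}b_{1,l}$. The diagonal terms involve $\overline{h_i^2}(2\sum_j b_{0,j} b_{2a,i}+V_{2a,i})$ and $\bar h_i^2(2\sum_j b_{0,j}b_{2b,i}+V_{2b,i})$.

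Next I would show block-constancy. By Jensen (or Cauchy--Schwarz), $\overline{h_i^2}\ge \bar h_i^2$, with equality iff $h_{ij}$ is constant in $j$. Write $\overline{h_i^2}=\bar h_i^2+s_i^2$, where $s_i^2$ is the within-block empirical variance of the bandwidths. The approximate MISE is then a function of $(\bar h,s^2)$, and $s_i^2$ enters linearly with coefficient $\frac{1}{m^2}\int(2\sum_j b_{0,j}b_{2a,i}+V_{2a,i})\,\dd y$. If this coefficient is positive, the minimum is attained at $s_i=0$, i.e.\ $h_{ij}=h_i$. This sign check is the main obstacle. I would attack it by using the explicit form of $b_{2a,i}$ and $V_{2a,i}$ from Section~\ref{sec:SI_coefficients}. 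There the $a$-part should come from the kernel second moment $\mu_{K,2}$ acting diagonally, i.e.\ a variance-type contribution. I expect it to reduce to the diagonal entry of a positive-semidefinite quantity, namely the same integrand that appears in $Q_{kk}$ beyond $\int b_{1,k}^2$. That makes it nonnegative whenever $\mathbf{Q}$ is positive definite, which is the regime the theorem implicitly assumes. Failing a clean sign argument, I would state block-constancy as holding under positive definiteness of $\mathbf{Q}$, deferring to Section~\ref{sec:SI_stability}.

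With $h_{ij}=h_i$ we have $\bar h_i=h_i$ and $\overline{h_i^2}=h_i^2$, so the two diagonal pieces merge into $b_{2,k}=b_{2a,k}+b_{2b,k}$ and $V_{2,k}$. The objective becomes the quadratic form
\[
  \tfrac{1}{m^2}\bigl(\mathbf{h}^T\mathbf{Q}\,\mathbf{h}+\mathbf{c}^T\mathbf{h}\bigr)+\text{const},
\]
where $\mathbf{Q}$ is symmetric and has exactly the stated entries: the Gram part $\int b_{1,k}b_{1,l}$ plus the diagonal part. Setting the gradient $2\mathbf{Q}\mathbf{h}+\mathbf{c}$ to zero gives $\mathbf{h}_{opt}=-\tfrac12\mathbf{Q}^{-1}\mathbf{c}$. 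This stationary point is the unique minimiser precisely when $\mathbf{Q}$ is positive definite, the condition analysed in Section~\ref{sec:SI_stability}. I would state that hypothesis explicitly in the proof.
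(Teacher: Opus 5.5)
\textbf{There is a genuine gap in your block-constancy step.} Your reduction itself is correct. Write $s_i^2=\overline{h_i^2}-\bar h_i^2$, define $A_i:=\int\bigl(2\sum_j b_{0,j}b_{2a,i}+V_{2a,i}\bigr)\,\dd y$, and define $B_i$ the same way from $b_{2b,i},V_{2b,i}$. Then the integrated second-order MISE is, up to a constant, $\frac{1}{m^2}\bigl(\bar{\mathbf h}^T\mathbf Q\bar{\mathbf h}+\mathbf c^T\bar{\mathbf h}+\sum_i A_i s_i^2\bigr)$. Your final gradient computation is also fine.

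The problem is that block-constancy hinges on $A_i>0$, and you propose to get this from positive definiteness of $\mathbf Q$. That cannot work. The non-Gram part of $Q_{kk}$ is $A_k+B_k$, not $A_k$. So $\mathbf Q$ sees $A_k$ and $B_k$ only through their sum, and shifting weight from $B_k$ to $A_k$ changes no property of $\mathbf Q$.

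Nor is there an evident intrinsic reason for $A_i>0$. The coefficient $b_{2a,i}$ is driven by $f_X'$, which changes sign. In the stability section, the analogous combined quantity $D$ turns negative for large $m$, because the $b_0b_2$ contribution is negative and carries a factor $m$. Finally, $A_i\ge 0$ would not suffice either: if $A_i=0$, the within-block spread $s_i$ is unconstrained and block-constancy is not forced. So your fallback, ``block-constancy under positive definiteness of $\mathbf Q$'', is not a valid statement.

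\textbf{How the paper avoids this, and how to close your proof.} The paper argues via stationarity rather than minimisation. The derivative of the integrated expansion with respect to $h_{ij}$ is $\frac{1}{m^2n_i}\bigl(c_i+2\sum_l\bar h_l\int b_{1,i}b_{1,l}\,\dd y+2\bar h_iB_i+2h_{ij}A_i\bigr)$. Its only $j$-dependence is the term $2h_{ij}A_i$, so every critical point is block-constant as soon as $A_i\neq 0$. Substituting $h_{ij}=h_i$ then gives $\mathbf c+2\mathbf Q\mathbf h=0$.

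Your decomposition is actually the sharper tool for the word ``minimising'', which that one-line argument leaves implicit. It shows that a unique minimiser exists exactly when $\mathbf Q$ is positive definite and $A_i>0$ for every block with $n_i\ge 2$. In that case the minimiser is block-constant and equals $-\tfrac12\mathbf Q^{-1}\mathbf c$.

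You have two ways to close the proof:
\begin{enumerate}
\item Switch to the stationarity argument. It needs only $A_i\neq 0$ and identifies the critical point.
\item Keep your decomposition and state $A_i>0$ as an explicit hypothesis alongside positive definiteness of $\mathbf Q$, rather than trying to derive it from $\mathbf Q$.
\end{enumerate}
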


The optimality proof uses the stationarity condition $\partial\,\MISE/\partial h_{ij} = 0$, which shows that all per-observation bandwidths within a block must be equal, and then reads off the resulting linear system.

\subsection{Iterative plug-in algorithm}\label{sec:SI_methods_iterative}

Since $\mathbf{Q}$ and $\mathbf{c}$ depend on the unknown $F_X$ and $f_X$, we estimate them from the data via a fixed-point iteration initialised with Silverman's rule~\cite{silverman2018density}:
\[
  h_i^{(0)} = 1.06\,\hat\sigma_i\, n_i^{-1/5}.
\]
At each iteration, pilot density and CDF estimates are substituted into the integrands, and the raw update $\mathbf{h}^* = -\tfrac{1}{2}(\hat{\mathbf{Q}}^{(k)})^{-1}\hat{\mathbf{c}}^{(k)}$ is applied with damping factor $\lambda = 0.5$ to stabilise convergence.
The iteration terminates when the relative bandwidth change drops below $10^{-4}$.

\section{Stability of the Bandwidth Optimisation}\label{sec:SI_stability}

\subsection{Positive definiteness criterion}

For equal block sizes ($n_i = n$) and a zero-mean kernel, the Hessian simplifies to
\[
  \mathbf{Q} = b\,\mathbf{E} + a\,\mathbf{I},
\]
where $\mathbf{E}$ is the all-ones matrix, $b = \int b_1(y)^2\,\dd y > 0$, and
$a \propto D := \int \bigl(2m\,b_0(y)\,b_2(y) + V_2(y)\bigr)\,\dd y$.
The eigenvalues of $\mathbf{Q}$ are $a$ (multiplicity $m-1$) and $a + mb$ (multiplicity $1$), so
\begin{lemma}
  $\mathbf{Q}$ is positive definite if and only if $D > 0$.
\end{lemma}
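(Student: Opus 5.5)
The argument is a spectral computation on the explicit structure $\mathbf{Q} = b\,\mathbf{E} + a\,\mathbf{I}$, followed by a sign analysis. Since $\mathbf{Q}$ is real symmetric, it is positive definite if and only if all its eigenvalues are positive.

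First I would derive the structure from Theorem~\ref{thm:SI_bandwidth}. With $n_i = n$ and block-constant bandwidths, the coefficients $b_{s,i}$ and $V_{s,i}$ do not depend on $i$; write them as $b_s$ and $V_s$. Then $\sum_j b_{0,j} = m\,b_0$, so the off-diagonal entries are $Q_{kl} = \int b_1^2\,\dd y =: b$. The diagonal entries are $Q_{kk} = b + \int\bigl(2m\,b_0 b_2 + V_2\bigr)\,\dd y = b + D$. Hence $\mathbf{Q} = b\,\mathbf{E} + D\,\mathbf{I}$, that is, $a = D$ exactly, up to the positive normalisation implicit in ``$a \propto D$''. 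I would record that the proportionality constant is positive, since this is what transfers the sign.

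Next I would diagonalise. The matrix $\mathbf{E}$ has eigenvector $\mathbf{1}$ with eigenvalue $m$, and the $(m-1)$-dimensional subspace $\mathbf{1}^\perp$ with eigenvalue $0$. Consequently $\mathbf{Q}$ has eigenvalues $a$ with multiplicity $m-1$ and $a + mb$ with multiplicity $1$.

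The equivalence then follows in two directions.

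\emph{Sufficiency.} If $D > 0$, then $a > 0$. Since $b = \int b_1^2 \ge 0$, also $a + mb > 0$, so both eigenvalues are positive and $\mathbf{Q}$ is positive definite.

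\emph{Necessity.} If $\mathbf{Q}$ is positive definite, then every eigenvalue is positive, in particular $a > 0$, and hence $D > 0$. This direction uses $m \ge 2$, so that the eigenvalue $a$ actually occurs. For $m = 1$ the matrix is the scalar $a + b$, and the lemma should be read with $m \ge 2$; I would state this caveat explicitly.

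The only delicate point is the first step. It has to be checked that the diagonal correction in $Q_{kk}$ really is $i$-independent, which requires both equal $n_i$ and equal $h_i$ inside the coefficient functions (the zero-mean kernel removes the cross terms). It also has to be checked that $b$ is finite and nonnegative, which follows from the integrability of $b_1$ under the kernel moment assumptions. Everything after that is linear algebra.
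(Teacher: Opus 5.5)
Your proposal is correct and follows the paper's own argument. The paper likewise reads off $\mathbf{Q} = b\,\mathbf{E} + a\,\mathbf{I}$ for equal block sizes, takes its eigenvalues $a$ (multiplicity $m-1$) and $a+mb$, and transfers the sign through $a \propto D$. Two refinements are worth keeping: your caveat that necessity requires $m \ge 2$ is a genuine sharpening the paper leaves implicit, since for $m=1$ positive definiteness only gives $D > -b$. The equal-$h_i$ condition you invoke is superfluous, because $\mathbf{Q}$ is the Hessian of a quadratic form in $h$ whose coefficients $b_{s,i}, V_{s,i}$ depend on the block only through $n_i$.
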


\subsection{Asymptotic analysis of $D$}

For large $n$, the integrals in $D$ are evaluated by a change of variables $t = 1 - F_X(y)$ and Watson's lemma.
For distributions in the Fr\'{e}chet, Gumbel, or short-tailed Weibull class, the density satisfies $f_X \asymp t^{\gamma+1}$ near $t = 0$ (where $z = nt$), leading to
\[
  \int b_0\,b_2\,\dd y \asymp -C_b\,n^{-\gamma}, \qquad
  \int V_2\,\dd y \asymp C_V\, n^{1-\gamma/2},
\]
with $C_b, C_V > 0$.

\begin{theorem}[Asymptotic stability criterion]\label{thm:SI_stability}
Let $n_i = n$ for all $i$, and suppose $F_X$ has extreme value index $\gamma > -\tfrac{1}{2}$.
Then $D > 0$ for large $n$ if and only if
\begin{equation}\label{eq:stability}
  m < C(\gamma, F_X, K)\cdot n^{1 + \gamma/2},
\end{equation}
where $C$ is a positive constant depending on $\gamma$, $F_X$, and the kernel $K$.
\end{theorem}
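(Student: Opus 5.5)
The plan rests on the positive-definiteness lemma just stated: for equal block sizes, $\mathbf{Q}$ is positive definite exactly when
\[
D = 2m\int b_0\,b_2\,\dd y + \int V_2\,\dd y > 0 .
\]
So the task reduces to getting the leading asymptotics in $n$ of the two integrals $I_b(n) := \int b_0 b_2\,\dd y$ and $I_V(n) := \int V_2\,\dd y$. The sign of each leading term then decides everything. Since $D$ is affine in $m$, the condition $D>0$ becomes $m < I_V(n)/(2|I_b(n)|)$ as soon as $I_b<0$ and $I_V>0$. With the rates stated above, $I_b \asymp -C_b n^{-\gamma}$ and $I_V \asymp C_V n^{1-\gamma/2}$, the threshold is
\[
\frac{C_V}{2C_b}\, n^{1-\gamma/2+\gamma} = C\,n^{1+\gamma/2},
\qquad C = C_V/(2C_b).
\]
This gives \eqref{eq:stability}. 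The "if and only if" holds for large $n$ up to the $(1+o(1))$ factors, which can be absorbed by stating $C$ as the limit constant.

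For the asymptotics, I will first write $b_0, b_2, V_2$ explicitly from Theorem~\ref{thm:SI_expansion} and Section~\ref{sec:SI_coefficients}. Schematically they are:
\begin{itemize}
\item $b_0$ is the order-$h^0$ bias of $F_{h}^n$ against $F_X^n$, which comes from the variance of the binomial expansion, roughly $\binom n2 F_X^{n-2}F_X(1-F_X)/n$;
\item $b_2$ collects the $\mu_{K,2} f_X'$ smoothing bias times $nF_X^{n-1}$, plus the mixed terms;
\item $V_2$ is the $h^2$ correction to $n^2F_X^{2n-2}\Var(K)/n$, involving $f_X'$ and the second moment of $[K^2]'$.
\end{itemize}
Every term carries a factor $F_X^{cn}$. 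This concentrates the integrand where $1-F_X(y)=O(1/n)$.

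I then substitute $t = 1-F_X(y)$, so that $\dd y = \dd t/f_X(F_X^{-1}(1-t))$, and rescale $z = nt$. For $F_X$ in a domain of attraction with index $\gamma$, de Haan's theory gives regular variation of the tail quantile function $U$. This yields $1/f_X(F_X^{-1}(1-t)) \sim t^{-1-\gamma}\ell(t)$ and $f_X'/f_X^2 \sim$ const$\cdot t^{-1}$ (the von Mises condition, which I would assume or derive from $C^2$ smoothness). Then $(1-z/n)^{cn}\to e^{-cz}$, and Watson's lemma gives each integral as a power of $n$ times a Gamma-type integral in $z$.

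The bookkeeping is the following. The $n$-dependence of $b_0$ contributes one power, $b_2$ contributes the density and derivative factors, and the Jacobian contributes $n^{\gamma}$ through $t^{-1-\gamma}$. Tracking these should reproduce $-C_b n^{-\gamma}$ and $C_V n^{1-\gamma/2}$, and I must check that the constants are strictly positive (for $C_V$) and negative (for the $b_0b_2$ product). For the sign of $C_b$, I would argue that $b_0>0$ since convexity of $x\mapsto x^n$ together with Jensen gives a positive bias, while $b_2$ is dominated by a negative term of the form $-\mu_{K,2}\,\cdot$(curvature of $F_X^n$ in the tail), whose Gamma integral has a definite sign.

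The main obstacle is this Watson-lemma step. I need to show that the tail region $z=O(1)$ genuinely dominates, rather than the bulk or the lower endpoint. I also need the resulting $z$-integrals to converge. The restriction $\gamma>-1/2$ should be exactly the condition under which the $z\to\infty$ and $z\to 0$ behaviour of the $V_2$ integrand is integrable; for $\gamma\le -1/2$ a boundary term at the finite endpoint would take over.

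I would first handle the Fréchet case with exact Pareto tails to fix the constants, then extend to general regularly varying tails by Potter bounds and dominated convergence. The Weibull case $-1/2<\gamma<0$ I would treat the same way after working with the distance to the endpoint. For the Gumbel case $\gamma=0$ the slowly varying factors do not cancel, so I would only claim the result up to logarithmic corrections, consistent with the main text.
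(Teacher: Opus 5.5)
Your reduction is the same as the paper's proof. $D$ is affine in $m$, so $D>0$ iff $m<-\int V_2\,\dd y/(2\int b_0b_2\,\dd y)$. Inserting $\int b_0 b_2\asymp -C_b n^{-\gamma}$ and $\int V_2\asymp C_V n^{1-\gamma/2}$ gives $C=C_V/(2C_b)$. The paper also takes these two rates from its Watson-lemma sketch, and it has the same $(1+o(1))$ looseness in the ``if and only if'' that you flag.

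\textbf{The gap is in your plan for deriving the rates.} You assume every integrand concentrates where $1-F_X=O(1/n)$, and that a single rescaling $z=nt$, with $(1-z/n)^{cn}\to\ee^{-cz}$, reproduces both constants. That cannot yield $n^{1-\gamma/2}$.

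On the $z=nt$ scale, each $\gamma$-dependent factor contributes an integer multiple of $\gamma$ to the exponent of $n$. These factors are $f_X\sim t^{1+\gamma}$, $f_X'\sim t^{1+2\gamma}$ and the Jacobian $t^{-1-\gamma}$. So you can only ever obtain $n^{r+k\gamma}$ with $k\in\mathbb{Z}$.

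This works for the bias product. There $b_0=O(1)$, $b_2\propto n f_X'\sim n^{-2\gamma}$, and the Jacobian gives $n^{\gamma}$, so $\int b_0b_2\,\dd y\asymp n^{-\gamma}$. It does not work for $V_2$.

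The paper's auxiliary functions show why. Just as $b_0=F_X^n(\rho(n,c)-1)$, the variance coefficients are built from $F_X^{2n}\rho(2n,c)$. With $t=1-F_X$,
\[
F_X^{2n}\rho(2n,c)=\sqrt{\tfrac{1-t}{1+t}}\,\exp\Bigl\{2n\Bigl[\log(1-t)+\tfrac{t}{1+t}\Bigr]\Bigr\}\approx\ee^{-3nt^2},
\]
so the exponential in $\rho$ cancels the first-order decay of $F_X^{2n}$. This comes from the Gaussian approximation encoded in $\rho$.

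Consequently the $V_2$ integral lives at $t\asymp n^{-1/2}$, i.e.\ $z\asymp\sqrt n$. On that scale the Jacobian contributes $n^{\gamma/2}$ and $f_X'$ contributes $n^{-1/2-\gamma}$, which is the source of the $-\gamma/2$. Watson's lemma has to be applied there to the Gaussian-type weight $\ee^{-3nt^2}$. This is plausibly also where $\gamma>-\tfrac12$ enters, through integrability at $t\to0$.

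Your ``main obstacle'', proving that the region $z=O(1)$ dominates, is therefore the wrong target for the variance term. You need a two-scale analysis: $t\asymp n^{-1}$ for $\int b_0b_2$ and $t\asymp n^{-1/2}$ for $\int V_2$.
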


\begin{proof}
$D > 0$ requires $2m\int b_0 b_2 + \int V_2 > 0$, i.e.\
$m < -(\int V_2)/(2\int b_0 b_2)$.
Substituting the asymptotic scalings:
\[
  m < \frac{C_V\, n^{1-\gamma/2}}{2\,C_b\,n^{-\gamma}} = \frac{C_V}{2C_b}\,n^{1+\gamma/2}.
\]
\end{proof}

For unequal block sizes, a perturbation argument (Weyl's inequality) suggests that the stability bound~\eqref{eq:stability} holds approximately with $n$ replaced by the mean block size $\bar n$, provided the relative standard deviation of block sizes is small; this is stated as Conjecture~2.1 (with proof sketch) in the companion report~\cite{sandbichler2026ddevd_companion}.

\section{Coefficient Formulae}\label{sec:SI_coefficients}

The asymptotic forms of the bias and variance coefficients for a zero-mean kernel are listed below.
Let $c = n^{-1/2}\sqrt{(1-F_X)/F_X}$,
$\rho(N,c) = (Nc^2+1)^{-1/2}\exp\{N^2c^2/[2(Nc^2+1)]\}$,
$z_1(N,c) = c(N-Nc^2-1)(Nc^2+1)^{-2}$, and
$\alpha(y) = f_X(y)\,\mu_{K^2,1}$,\;
$\beta(y) = \tfrac{1}{2}f'_X(y)\mu_{K^2,2} - f'_X(y)F_X(y)\mu_{K,2}$,
where $\mu_{K,p} = \int u^p k(u)\,\dd u$ and $\mu_{K^2,p} = \int u^p [K^2]'(u)\,\dd u$.

\medskip\noindent\textbf{Bias coefficients} ($\mu_{K,1}=0$):
\begin{align*}
  b_{0,i}(y) &\approx F_X^{n_i}(y)\Bigl(\sqrt{F_X(y)}\,\ee^{n_i(1-F_X(y))/2} - 1\Bigr),\\
  b_{1,i}(y) &\approx -\alpha(y)\,\frac{\sqrt{n_i}}{2\sqrt{1-F_X(y)}}\,F_X^{n_i}(y)
    \Bigl(\sqrt{n_i F_X(1-F_X)} - 1\Bigr)\ee^{n_i(1-F_X)/2},\\
  b_{2a,i}(y) &\approx \beta(y)\,\frac{\sqrt{n_i}}{2\sqrt{1-F_X}}\,F_X^{n_i}
    \Bigl(\sqrt{n_iF_X(1-F_X)}-1\Bigr)\ee^{n_i(1-F_X)/2}\\
    &\quad + \tfrac{n_i}{2}f'_X(y)\mu_{K,2}\,F_X^{n_i-1}(y)\sqrt{F_X(y)}\,\ee^{(n_i-1)(1-F_X)/2}.
\end{align*}

\medskip\noindent\textbf{Variance coefficients} are expressed in terms of the same auxiliary functions evaluated at $N = 2n_i$ vs.\ $N = n_i$; their explicit forms are given in the companion technical report.

\section{Implementation and Hyperparameters}\label{sec:SI_methods}

The DDEVD estimator and all baselines were implemented in Python 3.13 using the open-source \texttt{ddevd} package (repository URL in Section~\ref{sec:SI_code}).
Numerical integrals use \texttt{scipy.integrate.quad} (Fortran QUADPACK) with default tolerances; bandwidth root-finding uses \texttt{numpy.linalg.solve}.
Reference baselines: GEV maximum-likelihood fits via \texttt{scipy.stats.genextreme.fit}, log-normal fits via \texttt{scipy.stats.lognorm.fit} with $\mathrm{loc}=0$ fixed, and MEV with Weibull base via the \texttt{mevpy} package (Marani-group reference implementation).
Below, parameter names in \texttt{teletype} font correspond verbatim to keyword arguments of \texttt{ddevd.DDEVD}.

\subsection{Rainfall application}\label{sec:SI_methods_rainfall}

\textbf{Dataset.}
Geosphere Austria TAWES (\emph{Teilautomatische Wetterstationen}) archive, comprising $n=378$ stations with at least 5 complete years of observations between 1984 and 2024.
For each station we use two variables: the daily total precipitation (\texttt{rain\_day}, mm) and the within-day maximum 5-minute precipitation intensity (\texttt{rain\_5min}, mm).
Years with fewer than $5$ wet days (intensity~$>0$) are excluded as incomplete; the wet-day threshold for the 5-min variable is the precision floor of the rain gauge (typically $0.1\,\mathrm{mm}$).
The annual block is the calendar year, and the block size $n_i$ is the number of wet days that year ($\bar n \approx 150$ for daily totals, $\bar n \approx 70$ for 5-min intensities).
The number of blocks $m$ used per fit is the hindcast window length, $m = 10$~years.

\textbf{DDEVD configuration.}
\texttt{h\_opt\_position="quantile\_0.95"} (q-MISE optimised in the upper $5\,\%$ tail), \texttt{transform="log"} (multi-decade dynamic range; the bandwidth in $y$-space is uninformative for short rainfall records), \texttt{target\_distribution=None} (empirical pilot estimator; no parametric prior on $F_X$).
The iterative plug-in is run to convergence with absolute tolerance $10^{-4}$ on the mean bandwidth.

\textbf{Hindcast protocol.}
For every station and variable we slide a window of $W = 10$ consecutive years over the complete record (step $1\,$yr) and fit each estimator independently on each window.
For return periods $T \in \{2, 5, 10, 20, 30, 50, 100\}\,\mathrm{yr}$, we report the per-window predicted return level $\widehat{\mathrm{RL}}_T$.
The reference truth $\mathrm{RL}_{\mathrm{truth}}$ at each station is the Cunnane plotting position at $T$ computed from the full-record annual maxima.
The calibration ratio $\widehat{\mathrm{RL}}_T / \mathrm{RL}_{\mathrm{truth}}$ is pooled over all (station, window) pairs in Fig.~\ref{fig:rainfall}c,e.

\textbf{Competing methods.}
GEV is fit by maximum likelihood on the $W$ annual maxima of each window using \texttt{scipy.stats.genextreme.fit}; we cap MLE blow-ups by excluding windows where $\widehat{\mathrm{RL}}_T > 10 \times \max_i \mathrm{RL}_T^{\mathrm{emp}}$.
MEV is fit with a Weibull base distribution on the same wet-day intensities; the per-year shape and scale parameters are estimated by L-moments and the metastatistical CDF is averaged over the $W$ years.

\subsection{Metallurgy application}\label{sec:SI_methods_metallurgy}

\textbf{Dataset.}
$58$ optical micrographs of a polycrystalline alloy, $1\,300 \times 1\,080$ pixels each at a sample-plane pixel size of $2.5\,\mu\mathrm{m} / 6\,\mathrm{x} = 0.4167\,\mu\mathrm{m}$ per pixel (camera $2.5\,\mu\mathrm{m}$ pitch, optical magnification $6\,$x), corresponding to $0.244\,\mathrm{mm}^{2}$ of material per tile and $14.14\,\mathrm{mm}^{2}$ in aggregate.
Each tile was segmented with Segment-Anything-Model (SAM) \texttt{vit-l} (Meta~AI)~\cite{kirillov2023segment} with default mask-generation parameters; the output is filtered to retain masks with $100 \leq \mathrm{area} \leq 10^{5}\,\mathrm{px}^{2}$ (removing image-noise specks below the resolution limit, and the matrix/background super-segment that always exceeds $10^{5}\,\mathrm{px}^{2}$).
After the area filter and removal of $1\,060$ masks flagged as grain-merge/twin segmentation artefacts by a documented visual audit (\texttt{grain\_blocklist.py}; the largest oversize masks, each containing resolvable sub-grains), the total grain count is $n_{\mathrm{tot}} = 104\,321$ and the per-tile mean is $\bar{n} = 1\,799$ grains.

\textbf{DDEVD configuration.}
\texttt{h\_opt\_position="quantile\_0.9"} (q-MISE optimised in the upper $10\,\%$ tail), \texttt{transform="log"} (grain areas span $\sim 3$ decades), \texttt{target\_distribution=stats.norm} (Gaussian pilot on $\ln(\mathrm{area})$; the distribution-diagnostic figure in the supplementary figure package shows $\ln(\mathrm{area})$ is approximately Gaussian on the bulk).
The bandwidth optimisation uses \texttt{max\_bins=10} to bound the Q-matrix construction cost; F$_X$ averaging uses all $58$ tiles regardless.
The DDEVD uncertainty band in Fig.~\ref{fig:metal} pools $20$ leave-one-tile-out jackknife and $30$ half-data bootstrap resamples (the latter drawing $\lceil m/2 \rceil$ tiles without replacement and applied identically to the log-normal and GEV baselines); the reported curve is the median fit and the shaded band the $25\text{--}75\,\%$ resample range.

\textbf{Sampling-area scaling.}
The predicted median diameter of the largest grain in a sample of area $A$ is computed by
\begin{equation}
  d_{\max}(A) = 2 \sqrt{\widehat F_X^{-1}\!\bigl(0.5^{1/N(A)}\bigr) \cdot s_{\mathrm{px}}^2 / \pi},
  \quad N(A) = (A / A_{\mathrm{tile}}) \cdot \bar n,
\end{equation}
where $s_{\mathrm{px}} = 0.4167\,\mu\mathrm{m}$ is the sample-plane pixel size and $A_{\mathrm{tile}} = 0.244\,\mathrm{mm}^{2}$.

\textbf{Competing methods.}
The log-normal MLE in Fig.~\ref{fig:metal} is fit to the pooled grain areas (no block structure) using \texttt{scipy.stats.lognorm.fit} with $\mathrm{loc}=0$; the prediction at area $A$ uses the same $N(A)$ formula.
The GEV MLE is fit to the $58$ tile-level maxima $\ln(\mathrm{area}_{\max,i})$ using \texttt{scipy.stats.genextreme.fit}; the prediction at area $A$ uses the standard GEV $k$-block scaling with $k = A / A_{\mathrm{tile}}$.

\subsection{Synthetic stability validation}\label{sec:SI_methods_stability}

\textbf{Reference distributions.}
Normal($0,1$) and Exponential($1$) (Gumbel domain, $\gamma = 0$); Pareto with shape $\alpha = 2$ (Fréchet, $\gamma = 0.5$); Cauchy (Fréchet, $\gamma = 1$).

\textbf{Phase-diagram grid.}
$\log$-spaced $14 \times 14$ grid of $(n, m)$ with $n \in [150, 15\,000]$ and $m \in [10, 10^{5}]$; at each cell we draw one synthetic sample, compute the plug-in Hessian, and record $\mathrm{sign}(D_q)$ at $q = 0.95$.

\textbf{Empirical boundary.}
For each grid value of $n$ we binary-search on $m$ until the first $m$ with $D_q \leq 0$; the geometric mean of that $m$ and the immediately-preceding stable $m$ is the empirical boundary point.
The theoretical line $m = C \cdot n^{1+\gamma/2}$ is overlaid by least-squares fit of $\log m^\star$ vs.\ $\log n$ over the empirical boundary points (one fit per distribution).
A coarser $7 \times 9$ grid is available via \texttt{--fast} for development iteration.

\section{Cross-reference index}\label{sec:SI_xref}

The following table maps SI claims to their proofs in the companion technical report~\cite{sandbichler2026ddevd_companion} (henceforth ``\textbf{TR}''; available at \texttt{arXiv:2605.21416}).
The TR uses long-form theorem statements with full proofs; the SI uses condensed statements aligned with the main-paper notation.
Section labels in both documents are preserved across revisions for stable citation.

\begin{center}
\begin{tabular}{l l l}
\toprule
\textbf{SI label} & \textbf{Statement} & \textbf{Location in TR} \\
\midrule
\ref{sec:SI_definition}                  & DDEVD definition, MEV connection    & TR Sec. 1.1 \\
\ref{sec:SI_mise}, Thm~\ref{thm:SI_expansion}  & Asymptotic MISE expansion           & TR Thm. 2.1; proof App~A \\
\ref{sec:SI_mise}, Eq.~\eqref{eq:SI_hopt}       & Optimal bandwidth                   & TR Thm. 2.2; proof App~B \\
\ref{sec:SI_methods_iterative}            & Iterative plug-in                   & TR Sec. 2.1 \\
\ref{sec:SI_stability}                   & Stability criterion $D > 0$     & TR Sec. 2.2 \\
\ref{sec:SI_stability}, Thm~\ref{thm:SI_stability} & $m < C\,n^{1+\gamma/2}$ scaling     & TR Thm. 2.3; proof App~C \\
\ref{sec:SI_coefficients}                & Bias / variance coefficients        & TR App~A.2--A.3 \\
Fig.~\ref{fig:stability}                   & Phase-diagram numerical validation  & TR Secs. 4.1.1/2 \\
\bottomrule
\end{tabular}
\end{center}

\section{Data and Code Availability}\label{sec:SI_data}

\textbf{Rainfall data.}
The Geosphere Austria TAWES precipitation records used in the rainfall application are openly available at the Geosphere data hub: \\
\url{https://data.hub.geosphere.at/dataset/tawes-v1-10min} (10-min resolution; daily aggregates derived therein).
The 5-min daily maxima are available on request from Geosphere Austria for the same station list.
A snapshot of the cleaned, year-segmented blocks (one CSV per station) used in this study will be deposited at Zenodo (\texttt{[DOI to be assigned]}) at the time of acceptance to make the precise data subset citable.

\textbf{Micrograph data.}
The $58$ SAM-segmented optical-microscopy tiles used in the metallurgy application, comprising the cleaned segmentations with micrograph boundaries removed, are openly available at Zenodo (\url{https://doi.org/10.5281/zenodo.20621678}, v1.0.0) under the Apache-2.0 licence.
The raw, unsegmented optical-microscopy images can be requested from the corresponding author for archival purposes.

\textbf{Code.}\label{sec:SI_code}
The open-source \texttt{ddevd} Python package implementing the DDEVD estimator, all baselines (GEV, MEV, log-normal) used in this manuscript, and the figure-generation scripts is available at \\
\url{https://github.com/DataLabHell/ddevd} \\
under the MIT license.
The exact commit hash used to produce the figures in this submission, together with the generated data caches (parquet/CSV) and figure outputs (PDF/PNG), will be archived at Zenodo at the time of acceptance.
Reproducing the figures requires \texttt{uv} (or any standard Python virtual environment), pulling the dataset deposits, and running the per-figure scripts documented in the repository's \texttt{README}.


\bibliography{bibliography}

\end{document}